\tikzstyle kleinv=[circle,draw, minimum size=0.1cm]
\tikzstyle vertex=[circle, draw=black, fill=white, minimum size=15pt,inner 
\title{Linear Time LexDFS on Chordal Graphs} 
\author{Jesse Beisegel}{Brandenburg University of Technology, Cottbus, Germany} {}{}{}
\author{Ekkehard K\"ohler}{Brandenburg University of Technology, Cottbus, Germany}{}{}{}
\author{Robert Scheffler}{Brandenburg University of Technology, Cottbus, Germany} {}{}{}
\author{Martin Strehler}{Brandenburg University of Technology, Cottbus, Germany} {}{}{}
\authorrunning{J. Beisegel, E. K\"ohler, R. Scheffler and M. Strehler} 
\keywords{LexDFS, chordal graphs, linear time implementation, search trees, LexBFS} 
\begin{document}
  
  	\maketitle

 	\begin{abstract}
	Lexicographic Depth First Search (LexDFS) is a special variant of a Depth First Search (DFS), which was introduced by Corneil and Krueger in 2008. While this search has been used in various applications, in contrast to other graph searches, no general linear time implementation is known to date. In 2014, Köhler and Mouatadid achieved linear running time to compute some special LexDFS orders for cocomparability graphs. In this paper, we present a linear time implementation of LexDFS for chordal graphs. Our algorithm is able to find any LexDFS order for this graph class. To the best of our knowledge this is the first unrestricted linear time implementation of LexDFS on a non-trivial graph class. In the algorithm we use a search tree computed by Lexicographic Breadth First Search (LexBFS).
	\end{abstract}

\section{Introduction}

Graph searches are among the most basic algorithms in computer science. Nevertheless, they are very powerful tools and can be used to compute many important graph properties. For example, \emph{Breadth First Search} (BFS) is the standard procedure for testing bipartiteness or computing shortest paths with respect to the number of edges.
Similarly, \emph{Depth First Search} (DFS) can be used in algorithms to find strongly connected components in directed graphs~\cite{tarjan1972depth} or to test for planarity~\cite{zbMATH03481829}. 

In 1976, Rose, Tarjan, and Lueker~\cite{rose1976algorithmic} proposed  a modified variant of BFS to compute perfect vertex elimination orders of chordal graphs. This search, since named \emph{Lexicographic Breadth First Search} (LexBFS), uses the ordering of the already visited vertices and visits the vertex with lexicographically largest neighborhood next. Habib et al.~\cite{habib2000lex} gave a simple linear time implementation of LexBFS using partition refinement, which, for example, also provides a linear time greedy algorithm for finding minimum colorings of chordal graphs.  

It was only in 2008 that a corresponding lexicographical variant for DFS was introduced by Corneil and Krueger~\cite{corneil2008unified}. Similar to LexBFS, this search computes perfect elimination orders on chordal graphs. Therefore, it can be used to find minimum colorings as well as all minimal separators and all maximal cliques on this graph class~\cite{xu2013moplex}. Besides this, LexDFS was used in the field of data mining to design an efficient hierarchical clustering algorithm~\cite{creusefond2017lexdfs}. However, no general linear time implementation of LexDFS is known to date. An implementation with running time in ${\cal{O}}(\min\{n^2,n+m\log n\})$ is given in~\cite{krueger2005graph}. Spinrad announced an ${\cal{O}}(m\log\log n)$-implementation~\cite{spinrad20??efficient} which has not been published as of yet. In~\cite{kohler2014linear}, Köhler and Mouatadid present the first linear time algorithm to compute a LexDFS cocomparability order, that is, a special class of LexDFS orders can be computed in linear time on cocomparability graphs using modular decomposition. However, there are LexDFS orders of cocomparability graphs that cannot be computed by this approach. Even more restricting, it is not possible to choose an arbitrary start vertex for the search. Nevertheless, this result can be used to design linear time algorithms which find minimum path covers~\cite{corneil2013ldfs}, maximum matchings~\cite{mertzios2018linear} as well as maximum independent sets, minimum clique covers and minimum vertex covers~\cite{corneil2016power} on cocomparability graphs.

An important concept in the theory of graph searches are search trees. Already in 1972, Tarjan~\cite{tarjan1972depth} gave a complete characterization of DFS-trees as so-called palm trees. However, no algorithm that determines whether a given spanning tree of a graph $G$ is a DFS-tree of $G$ was specified in that work. Using the concept of palm trees, Hopcroft and Tarjan developed a linear time algorithm for testing planarity of a graph~\cite{zbMATH03481829}. 
In 1985, Hagerup formulated the problem of checking whether a given spanning tree of $G$ can be obtained by a DFS and presented a linear time algorithm for this problem in~\cite{hagerup1985biconnected}. In the same year, Hagerup and Novak~\cite{hagerup1985recognition} presented a linear time algorithm for the recognition of BFS-trees. 
Similar results were obtained by Korach and Ostfeld~\cite{korach1989dfs} for DFS-trees and Manber~\cite{manber1990} for BFS-trees. Recently, Beisegel et al.~\cite{beisegel20??recognition,beisegel2019recognizing} studied the search tree recognition problem for LexBFS, LexDFS and other searches.

\paragraph*{Our Contribution.} In this paper, we give the first linear time implementation of LexDFS on chordal graphs. We show for all graphs that the computation of a LexDFS order is linear time equivalent to the construction of a LexDFS search tree, i.e., there are linear time reductions between both problems. The combination of this result with some properties of search trees of LexBFS on chordal graphs yields a linear time algorithm that can compute any LexDFS order of a given chordal graph. To the best of our knowledge this is the first unrestricted linear time implementation of LexDFS on a non-trivial graph class. Furthermore, we show that testing whether a given order is in fact a LexDFS order is linear time equivalent to the recognition of LexDFS search trees.

\section{Preliminaries}

Throughout this paper, we consider finite, simple, undirected and connected graphs $G=(V,E)$ with $n=|V|$ vertices and $m=|E|$ edges. An edge between $u$ and $v$ is simply denoted by $uv$. For a vertex $v\in V$, the \emph{neighborhood} of $v$ is denoted by $N(v)$, i.e., $N(v)=\{u\in V~|~uv\in E\}$. For a subset $S\subseteq V$, we define the neighborhood as $N(S)=\{v\in V \setminus S ~|~ \exists u\in S:uv\in E\}$.

Given a subset $S$ of vertices in $G$, the \emph{subgraph of $G$ induced by $S$} is denoted by $G[S]$, where $V(G[S])=S$ and $E(G[S])=\{uv\in E(G)~|~u\in S,~v\in S\}$. The subgraph induced by $V(G)\setminus S$ is denoted by $G-S$ and, in the case that $S$ contains just one element, we simply write $G-v$ instead of $G-\{v\}$. 

A graph $G$ that contains no induced cycle of length larger than 3 is called \emph{chordal}. Other equivalent definitions of chordal graphs can be found in~\cite{brandstadt1999graph}. A \emph{tree} is an acyclic connected graph and a \emph{spanning tree} of a graph $G$ is an acyclic connected subgraph of $G$ which contains all vertices of $G$. A tree together with a distinguished \emph{root vertex} $s$ is said to be \emph{rooted}. In such a rooted tree a vertex $v$ is an \emph{ancestor} of vertex $w$ if $v$ is an element of the unique path from $w$ to the root $s$. In particular, if $ v $ is adjacent to $ w $, it is called the \emph{parent} of $ w $. A vertex $ w $ is called the \emph{descendant (child)} of $ v $ if $ v $ is the ancestor (parent) of $ w $.

A (connected) graph search is, in the most general sense, a mechanism for systematically visiting all vertices of a graph. Starting at a vertex $s\in V$, we expand the set of vertices $S$ beginning with $S=\{s\}$ by moving a vertex from $N(S)$ to $S$, which may also add new neighbors to $N(S)$ in consequence. The result of this procedure is a \emph{search order} $\sigma=(v_1=s,v_2,\dots,v_n)$ of the vertices of the graph listing the vertices in order of occurrence. For any linear vertex order $\sigma$ we write $u\prec_\sigma v$ if $u$ appears before $v$ in the order and say that $ u $ is \emph{to the left} of $ v $ and that $ v $ is \emph{to the right} of $ u $. Furthermore, $\sigma^-$ denotes the reverse order of $\sigma$, that is, $\sigma^-=(v_n,v_{n-1},\dots,v_1)$.

There are many graph search protocols which differ in the way in which a vertex from $N(S)$ is chosen next. The two most common graph searches are \emph{Breadth First Search} and \emph{Depth First Search} which can be simply described as using a queue and a stack to store the vertices in $N(S)$, respectively. Given a graph search protocol $\mathcal{P}$ and a vertex order $\sigma$, we say that $\sigma$ is a $\mathcal{P}$-\emph{order} if there exists a valid $\mathcal{P}$ search on $G$ that returns $\sigma$. 

In~\cite{corneil2008unified}, Corneil and Krueger present a characterizing \emph{four point property} of DFS orders.

\begin{lemma}[\cite{corneil2008unified}]\label{lemma:3point-dfs} A vertex order $\sigma$ is a DFS order of a graph $G = (V,E)$ if and only if for every triple $a \prec_\sigma b \prec_\sigma c$ where $ac \in E$ and $ab \notin E$ there is a vertex $d$ with $a \prec_\sigma d \prec_\sigma b$ such that $db \in E$.
\end{lemma}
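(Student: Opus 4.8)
The plan is to route both directions through a single operational restatement of the depth-first rule purely in terms of the order. Concretely, I would first show that $\sigma=(v_1,\dots,v_n)$ is a DFS order if and only if, for every $i$, the vertex $v_{i+1}$ is adjacent to the \emph{latest} already-visited vertex that still has an unvisited neighbour; that is, writing $S=\{v_1,\dots,v_i\}$ and letting $v_j$ be the $\prec_\sigma$-maximal vertex of $S$ with $N(v_j)\cap(V\setminus S)\neq\emptyset$, we require $v_jv_{i+1}\in E$. The justification rests on one monotonicity observation: at any moment the DFS stack, read from bottom to top, is a root-to-node path of the search tree and therefore lists vertices in increasing order of visit time; moreover, once a vertex is popped it has no unvisited neighbour and can never regain one, since neighbours only pass from unvisited to visited. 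Hence the top of the stack is always precisely the latest-visited vertex possessing an unvisited neighbour, and $v_{i+1}$ is by construction one of its neighbours.

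For the forward direction I would fix a triple $a\prec_\sigma b\prec_\sigma c$ with $ac\in E$ and $ab\notin E$, and consider the instant at which $b$ is chosen, so that $S$ is exactly the set of vertices preceding $b$. Then $a\in S$ while $c\in V\setminus S$, so $a$ is an already-visited vertex with an unvisited neighbour. Let $d=v_j$ be the latest such vertex (the parent of $b$ in the search tree); by the restatement $db\in E$ and $d\prec_\sigma b$. Since $a$ also qualifies as a vertex of $S$ with an unvisited neighbour, maximality gives $a\preceq_\sigma d$, and $a\neq d$ because $db\in E$ while $ab\notin E$; thus $a\prec_\sigma d\prec_\sigma b$, as required.

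For the converse I would assume the stated property and suppose, for contradiction, that $\sigma$ violates the restated rule at some step $i$: the latest vertex $v_j\in S$ with an unvisited neighbour satisfies $v_jv_{i+1}\notin E$. Pick a neighbour $v_c$ of $v_j$ with $c>i$; since $v_jv_{i+1}\notin E$ we have $c>i+1$, so $v_j\prec_\sigma v_{i+1}\prec_\sigma v_c$ with $v_jv_c\in E$ and $v_jv_{i+1}\notin E$. Applying the property to this triple yields a vertex $d$ with $v_j\prec_\sigma d\prec_\sigma v_{i+1}$ and $dv_{i+1}\in E$. Then $d\in S$ has the unvisited neighbour $v_{i+1}$ and satisfies $v_j\prec_\sigma d$, contradicting the maximality of $v_j$. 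Hence no step is violated and $\sigma$ is a DFS order.

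The main obstacle is the reformulation in the first paragraph: correctly identifying that the stack top equals the globally latest visited vertex with an unvisited neighbour, which hinges on the monotonicity of the ``has an unvisited neighbour'' predicate together with the root-to-node/visit-time structure of the stack. Once this equivalence is in place, both implications reduce to a one-line maximality argument, and the two directions become essentially symmetric.
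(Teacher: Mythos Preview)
The paper does not give its own proof of this lemma: it is quoted verbatim from Corneil and Krueger~\cite{corneil2008unified} and used as a black box, so there is no in-paper argument to compare against.

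That said, your proposal is sound. The operational restatement---``$v_{i+1}$ is adjacent to the $\prec_\sigma$-latest already-visited vertex that still has an unvisited neighbour''---is exactly the right intermediate characterisation, and your justification via the stack is correct: the stack, read bottom to top, is the root-to-current path in the DFS tree (hence increasing in visit time), every popped vertex has permanently lost all unvisited neighbours, and the algorithm pops until the top again has one. Together these facts pin down the stack top as the globally latest visited vertex with an unvisited neighbour, which is what you need. One point worth making slightly more explicit is the direction ``restatement $\Rightarrow$ DFS order'': you should state the obvious induction that a DFS can be \emph{steered} to produce $\sigma$, i.e.\ after visiting $v_1,\dots,v_i$ in this order the stack top (post-popping) is precisely $v_j$, so choosing $v_{i+1}$ is legal. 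You clearly have this in mind, but in the write-up it is folded into the first paragraph and could be separated out for clarity.

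Both the forward and backward implications between the restatement and the four-point condition are clean one-line maximality arguments, exactly as you describe; I see no gap there.
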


In the same paper, the authors introduced \emph{Lexicographic Depth First Search} (LexDFS, see Algorithm~\ref{alg:lexdfs}), a variant of DFS which uses labels and their lexicographic ordering to break ties during the search. 

\begin{algorithm2e}
\caption{Lexicographic Depth First Search}
	\label{alg:lexdfs}
	\KwIn{Connected graph $G=(V,E)$ and a distinguished vertex $ s \in V $}
	\KwOut{Ordering $ \sigma $ of $V$ starting at $s$}
	\Begin{
		$ label(s) \leftarrow (0) $\;
		
		\lForEach{vertex $v \in V-{s}$}{assign to $ v $ the empty label}
		
		\For{$ i \leftarrow 1 $ to $ n $}{
			pick an unnumbered vertex $ v $ with lexicographically largest label\;\label{alg:line:labelchoose:ldfs}
			$ \sigma(i) \leftarrow v$\;
			\lForEach{unnumbered vertex $ w \in N(v) $}{prepend $ i $ to $ label(w) $}}
	}
\end{algorithm2e}

The idea of using such a lexicographic ordering of labels originates from an algorithm for the calculation of perfect elimination orders of chordal graphs, given by Rose, Lueker, and Tarjan~\cite{rose1976algorithmic}, since named \emph{Lexicographic Breadth First Search} (LexBFS, see Algorithm~\ref{alg:lexbfs}).

\begin{algorithm2e}
\caption{Lexicographic Breadth First Search}
	\label{alg:lexbfs}
	\KwIn{Connected graph $G=(V,E)$ and a distinguished vertex $ s \in V $}
	\KwOut{Ordering $ \sigma $ of $V$ starting at $s$}
	\Begin{
		$ label(s) \leftarrow (n) $\;
		
		\lForEach{vertex $v \in V-{s}$}{assign to $ v $ the empty label}
		
		\For{$ i \leftarrow 1 $ to $ n $}{
			pick an unnumbered vertex $ v $ with lexicographically largest label\;\label{alg:line:labelchoose:lbfs}	
			$ \sigma(i) \leftarrow v$\;
			\lForEach{unnumbered vertex $ w \in N(v) $}{append $ (n-i) $ to $ label(w) $}}\label{alg:line:labelupdate}		
	}
\end{algorithm2e}

Both LexDFS and LexBFS are special variants of the standard searches and, thus, every LexDFS order is also a DFS order and every LexBFS order is also a BFS order. In both algorithms, the vertices are labeled by their already visited neighbors (see line~\ref{alg:line:labelupdate} in Algorithm~\ref{alg:lexdfs} and~\ref{alg:lexbfs}). While in LexDFS vertices visited later in the search have a larger significance for the lexicographic order of the label, in LexBFS it the opposite, i.e., vertices visited earlier have a larger impact.

Corneil and Krueger~\cite{corneil2008unified} also present a four point property of LexDFS orders.

\begin{lemma}[\cite{corneil2008unified}]\label{lemma:3point-lexdfs} A vertex order $\sigma$ is a LexDFS order of a graph $G = (V,E)$ if and only if for every triple $a \prec_\sigma b \prec_\sigma c$ where $ac \in E$ and $ab \notin E$ there is a vertex $d$ with $a \prec_\sigma d \prec_\sigma b$ such that $db \in E$ and $dc \notin E$.
\end{lemma}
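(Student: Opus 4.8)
The plan is to work directly with the labels maintained by Algorithm~\ref{alg:lexdfs}. The key observation is that, at any point of the search, the label of an unvisited vertex $v$ is exactly the list of visiting times (step indices) of its already-visited neighbors written in \emph{strictly decreasing} order, since each newly visited vertex prepends its (larger) index. Hence comparing two labels lexicographically amounts to comparing, for the two vertices, the step at which their most recently visited \emph{distinguishing} neighbor occurred: the vertex whose label first exhibits a larger entry wins. I will use this ``first point of difference'' reformulation throughout, together with the fact that entries are strictly decreasing, so that a value either occurs in a label or does not, and its position is pinned down by its size.

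For the forward direction, assume $\sigma=(v_1,\dots,v_n)$ is a LexDFS order and take a triple $a\prec_\sigma b\prec_\sigma c$ with $ac\in E$ and $ab\notin E$. I would consider the configuration at the moment $b$ is chosen; there both $b$ and $c$ are still unvisited, so $\mathrm{label}(b)\ge_{\mathrm{lex}}\mathrm{label}(c)$. Since $a$ is visited before $b$ and $a\in N(c)\setminus N(b)$, the index of $a$ occurs in $\mathrm{label}(c)$ but not in $\mathrm{label}(b)$; thus the labels are not equal, and $\mathrm{label}(b)$ cannot be a prefix of $\mathrm{label}(c)$. They therefore first differ at some position $k$ with a strictly larger entry in $\mathrm{label}(b)$; let $d$ be the vertex visited at that entry. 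I would then verify that $d\in N(b)\setminus N(c)$ (its index lies in $\mathrm{label}(b)$ but, by the decreasing-order argument, not in $\mathrm{label}(c)$) and that $a\prec_\sigma d\prec_\sigma b$ (the index of $a$ sits at position $\ge k$ in $\mathrm{label}(c)$, hence is smaller than $d$'s index). This $d$ is exactly the required witness.

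For the converse, assume $\sigma$ has the four point property and run LexDFS, forcing it to follow the already-fixed prefix of $\sigma$; it suffices to show that at every step $i$ the vertex $v_i$ has a lexicographically maximal label among the unvisited vertices, so that the search may legitimately select it. Suppose not: some later vertex $v_j$ with $j>i$ has $\mathrm{label}(v_j)>_{\mathrm{lex}}\mathrm{label}(v_i)$ at step $i$. Reading off the first point of difference yields a vertex $a=v_\ell$ with $\ell<i$ such that $a\in N(v_j)\setminus N(v_i)$, i.e.\ $a\prec_\sigma v_i\prec_\sigma v_j$ with $av_j\in E$ and $av_i\notin E$. Applying the four point property produces a vertex $d$ with $a\prec_\sigma d\prec_\sigma v_i$, $dv_i\in E$ and $dv_j\notin E$; writing $p$ for its index gives $\ell<p<i$, $p\in\mathrm{label}(v_i)$ and $p\notin\mathrm{label}(v_j)$.

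The crux of the argument, and the step I expect to be the main obstacle, is deriving a contradiction from this $d$. Here I would use that the two labels agree on all entries larger than $\ell$: these are precisely the shared prefix preceding the first difference, because the differing entry of $\mathrm{label}(v_j)$ equals $\ell$ and all entries are strictly decreasing. Since $p>\ell$ and $p\in\mathrm{label}(v_i)$, this forces $p$ to be one of the shared prefix entries, whence $p\in\mathrm{label}(v_j)$ --- contradicting $dv_j\notin E$. The delicate points are making the ``entries above $\ell$ coincide'' claim precise in all cases (including when one label is a prefix of the other, and the degenerate situation of an empty label when a vertex has no visited neighbor yet), and checking that a \emph{single} application of the four point property already yields the contradiction, rather than an inductive descent through the labels. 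Once this is in place, induction on $i$ shows that $\sigma$ is a valid LexDFS order.
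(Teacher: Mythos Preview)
The paper does not give its own proof of this lemma; it is quoted verbatim from Corneil and Krueger~\cite{corneil2008unified} and used as a black box. So there is nothing in the present paper to compare your argument against.

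That said, your proposal is correct and is essentially the standard argument. Both directions hinge on exactly the observation you isolate: at the moment a vertex is selected, its label is the strictly decreasing sequence of indices of its already-visited neighbors, so two labels agree lexicographically up to the first position where one has a strictly larger entry, and the set of entries \emph{above} that threshold is identical in the two labels. In the forward direction this immediately produces the witness $d$ with $d\in N(b)\setminus N(c)$ and $a\prec_\sigma d$. In the converse direction, your choice of $a=v_\ell$ as the entry of $\mathrm{label}(v_j)$ at the first differing position makes the contradiction go through in one shot: the $d$ supplied by the four point property has index $p>\ell$ and $p\in\mathrm{label}(v_i)$, hence $p$ lies in the common prefix and so $p\in\mathrm{label}(v_j)$, contradicting $dv_j\notin E$. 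The prefix case (where $\mathrm{label}(v_i)$ is a proper prefix of $\mathrm{label}(v_j)$) is covered by the same reasoning, since then every entry of $\mathrm{label}(v_i)$ exceeds $\ell$ and is shared; and the empty-label situation resolves itself because the four point property already guarantees a visited neighbor $d$ of $v_i$, so $\mathrm{label}(v_i)$ cannot be empty once $a$ exists. No inductive descent through the labels is needed.
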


A variant of LexDFS and LexBFS is the technique of ``multisweeping''. This describes the multiple application of some graph search, where each run of the search uses the order given by the previous application as a so-called \emph{``tie-break'' rule}, that is, a priority list which decides which vertex can be visited next in those cases where the given search paradigm allows several different options. It was first used by Simon~\cite{simon1991new} in an algorithm for the recognition of interval graphs which is flawed as was shown by Ma~\cite{ma????}. Nevertheless, ``multisweeping'' has proven to be very fruitful in recent years~\cite{corneil2013ldfs,zbMATH05822744}. In the case of LexDFS, this technique implies a new search scheme known as LexDFS$ ^+ $: Given a linear order $ \rho $ of the vertices, LexDFS$ ^+ (\rho)$ is computed by executing a regular LexDFS with the modification that in line~\ref{alg:line:labelchoose:ldfs} of Algorithm~\ref{alg:lexdfs} the rightmost element with regard to $ \rho $ is chosen among all vertices with lexicographically largest label. The searches DFS$^+$, BFS$^+$ and LexBFS$^+$ are defined analogously. Note that all these searches yield unique orders, as there are no more ties to break in the algorithms.

It is not difficult to see that, given the reverse of a search order as tie break, such a procedure yields that same order again.

\begin{observation}\label{obs:lexdfs+}
Let $G = (V,E)$ be a graph and let $\sigma$ be a vertex order of $G$. The order $\sigma$ is a LexDFS order of $G$ if and only if LexDFS$^+(\sigma^-)$ is equal to $\sigma$. This also holds for LexBFS and LexBFS$^+(\sigma^-)$.
\end{observation}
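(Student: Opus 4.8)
The plan is to prove the two directions of the equivalence separately and then observe that the argument is uniform across the two searches. The ``if'' direction is immediate: by construction $\mathrm{LexDFS}^+(\sigma^-)$ is nothing but a run of ordinary LexDFS in which ties in line~\ref{alg:line:labelchoose:ldfs} of Algorithm~\ref{alg:lexdfs} are resolved by a fixed priority list. Hence its output is a valid LexDFS order, and if that output equals $\sigma$ then $\sigma$ is a LexDFS order. The same holds verbatim for LexBFS.

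For the main direction I would assume that $\sigma=(v_1,\dots,v_n)$ is a LexDFS order and show by induction on $i$ that $\mathrm{LexDFS}^+(\sigma^-)$ selects $v_i$ at step $i$. The structural fact driving the induction is that the label assigned to any still-unnumbered vertex $w$ at the start of step $i$ is determined solely by which of $v_1,\dots,v_{i-1}$ are neighbors of $w$ and in what order they were visited (each such visit prepends its index to $label(w)$); it does not depend on how ties were broken. Consequently, once the induction hypothesis guarantees that $\mathrm{LexDFS}^+(\sigma^-)$ has reproduced the prefix $v_1,\dots,v_{i-1}$, the entire label configuration it sees at step $i$ is identical to the one seen at step $i$ by the LexDFS run that produced $\sigma$. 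Since $\sigma$ is a LexDFS order, $v_i$ carries a lexicographically largest label among the unnumbered vertices at that moment, so $v_i$ is among the candidates that $\mathrm{LexDFS}^+(\sigma^-)$ is permitted to choose.

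It then remains to check that the tie-break picks exactly $v_i$. Here I would translate ``rightmost with respect to $\sigma^-$'' into ``leftmost with respect to $\sigma$'', i.e.\ smallest index in $\sigma$: since $\sigma^-$ reverses positions, the vertex appearing last in $\sigma^-$ is the one appearing first in $\sigma$. At step $i$ the unnumbered vertices are precisely $v_i,v_{i+1},\dots,v_n$, so the candidate set is a subset of these; among them $v_i$ has the smallest $\sigma$-index and is therefore the unique leftmost candidate in $\sigma$, equivalently the rightmost in $\sigma^-$. Thus $\mathrm{LexDFS}^+(\sigma^-)$ selects $v_i$, which closes the induction and gives $\mathrm{LexDFS}^+(\sigma^-)=\sigma$.

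Finally I would note that the argument used only two properties of the search: that labels are a function of the visited prefix, and that the vertex chosen at each step has a lexicographically largest label. Both hold equally for LexBFS (with ``append'' in place of ``prepend'' and the initial label $(n)$), so the identical induction yields the LexBFS statement. The only step requiring care -- and the one I would state explicitly -- is the reading of the tie-break direction: one must confirm that feeding $\sigma^-$ as the priority list makes the rule prefer the $\sigma$-earliest candidate, so that the already-visited prefix never blocks $v_i$ and the smallest-index unnumbered candidate is exactly $v_i$.
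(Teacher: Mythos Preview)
Your proposal is correct. The paper does not actually prove this statement --- it is labeled an \emph{Observation} and is preceded only by the remark ``It is not difficult to see that, given the reverse of a search order as tie break, such a procedure yields that same order again.'' Your induction on the prefix, using that labels depend only on the already-visited vertices and that the rightmost candidate in $\sigma^-$ is the leftmost unnumbered vertex in $\sigma$, is exactly the routine verification the paper leaves to the reader; there is nothing to compare against.
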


Using a technique called \emph{partition refinement}, LexBFS can be implemented in linear time~\cite{habib2000lex}. Given a set $S$, we call $ \mathcal{Q} =(Q_1, Q_2, \ldots , Q_k)$ a partition of $ S $ if $S=\bigcup_{i_=1}^{k} Q_i$ with pairwise disjoint sets $Q_i \cap Q_j=\emptyset$, $i\ne j$. Note that a partition is an \emph{ordered} list of subsets. We say that a subset $S'\subseteq S$ refines $\mathcal{Q}$ if $Q_i$ is replaced by a subpartition $(A_i,B_i)$ where $A_i=Q_i \cap S'$ and $B_i=Q_i\setminus A_i$ whenever both sets are non-empty.  In particular, for LexBFS we start with $\mathcal{Q}=(V)$ and starting vertex $s$. Now, we refine $\mathcal{Q}$ with $\{s\}$ which separates $s$ in a single set. Afterwards, we refine with $N(s)$. In the first iteration, this yields the partition $\left(\{s\},N(s),V\setminus( N(s)\cup \{s\})\right)$. The vertex whose neighborhood is used to refine the partition classes is called a \emph{pivot}. Choose the next pivot $v$ from $N(s)$ and refine with $\{v\}$ and then $N(v)$. For $Q_i=\{v\}$, repeat refining using an element from the set $Q_{i+1}$ as the next pivot, maintaining the order of the partition classes created so far. As shown in~\cite{habib2000lex}, this final partition can be computed in linear time and it is actually a LexBFS order. The pivot is just the vertex visited by the search and it \emph{pulls} its neighbors to the front of each set. Unfortunately, no linear time implementation of partition refinement for LexDFS is known to date.

Usually, a graph search is associated with a search tree which is a spanning tree of the graph. Given a BFS order $\sigma=(v_1,\dots,v_n)$, a vertex $v_i$ is typically connected to the leftmost neighbor in $(v_1,\dots,v_{i-1})$, i.e., it is connected to the vertex that was current at the point at which $v_i$ was added to $N(S)$. Contrary, given a DFS order $\sigma=(v_1,\dots,v_n)$, a vertex $v_i$ is connected to the rightmost neighbor in $(v_1,\dots,v_{i-1})$, i.e., it is connected to the neighbor which occurred last before $v_i$ itself was visited. These two different approaches of constructing a search tree  give rise to the following definition.

\begin{definition}[\cite{beisegel2019recognizing}]\label{def:tree}
	Given a search order $ \sigma=(v_1, \ldots , v_n) $ of a given search on a connected graph $ G=(V,E) $, the \emph{first-in tree} (or \cf-tree) of $\sigma$ is the tree consisting of the vertex set $ V $ and an edge from each vertex to its leftmost neighbor in~$ \sigma $. 
	The \emph{last-in tree} (or \cl-tree) of $\sigma$ is the tree consisting of the vertex set $ V $ and an edge from each vertex $ v_i $ to its rightmost neighbor $ v_j $ in $ \sigma $ with $ j < i $.
	In both cases, $v_1$ is the \emph{root} of the search tree.
\end{definition}

The notation of \cf-trees and \cl-trees was introduced in~\cite{beisegel2019recognizing}, where the recognition problem of these search trees was studied. In contrast to the original definition, we always assume that a search tree has a designated root. In particular, two search trees on a graph $G$ are equal if they use the same edge set and if they have the same root. Given a search protocol $\mathcal{P}$ and a spanning tree $T$ of $G$ rooted in $s$, we say that $T$ is an \cl-\emph{tree} (\cf-\emph{tree}) of $\mathcal{P}$ on $G$ if there is a $\mathcal{P}$-order of $G$ starting at $s$ with \cl-tree (\cf-tree) $T$.

Although, it would be most natural to consider the \cf-tree for LexBFS, the \cl-tree of LexBFS is a key ingredient in our procedure on chordal graphs.

\section{Search Orders and Trees of LexDFS}

Given a search order, it is easy to construct the corresponding search tree in linear time by simply using Definition~\ref{def:tree}. However, if we are only given a search tree, then it is not immediately clear how to find a search order that results in this tree. In this section, we present a linear time algorithm that computes a LexDFS order for a given \cl-tree of LexDFS. This shows that both recognition and creation of search orders and \cl-trees is linear time equivalent in the case of LexDFS.

The main idea of this algorithm is to use a special tie-break rule in form of a linear order of the vertices $ \tau $ such that a simple run of DFS$ ^+(\tau) $ on the tree is a LexDFS order of $ G $ with tree $ T $. This tie-break rule $ \tau $ is computed by using a form of partition refinement which moves from the leaves of the tree towards its root. The pseudo code of this procedure is given in Algorithm~\ref{algo:ordering}.

\begin{algorithm2e}
	\caption{\textsc{Ordering}($G$, $T$, $s$, $\rho$)}\label{algo:ordering}	
	\KwIn{Connected graph $G=(V,E)$, an \cl-tree $T$ of DFS on $G$ rooted in $s \in V$, order $\rho$ of $V$ ending with vertex $s$}
	\KwOut{order $\sigma$ of $V$ starting at $s$}
	\Begin{
		$\beta \leftarrow$  reverse of a BFS order of $ T $ starting at $ s $\;\label{line:begin1st}
		$\mathcal{Q} \leftarrow (V)$\;
		\For{$i$ $\leftarrow$ 1 to $n$}{\label{line:for1}
			$v \leftarrow \beta(i)$\;
			refine $\mathcal{Q}$ with $\{w \in N(v) ~|~ w \prec_\beta v\}$\;
		}\label{line:for2}
		order every set in $\mathcal{Q}$ with respect to $\rho^-$ and move $ \{s\} $ to the leftmost position\;
		$\tau \leftarrow$ reverse of the final order of vertices in $\mathcal{Q}$\;\label{line:end1st}
		$ \sigma \leftarrow $ DFS$^+ (\tau) $ on $ G $\;			
		\Return{$\sigma$}\;
	}	
\end{algorithm2e}

Before we begin with the analysis of this algorithm we present some general results on \cl-trees of DFS. The first is a lemma by Tarjan~\cite{tarjan1972depth} which characterizes \cl-trees of DFS.  

\begin{lemma}[\cite{tarjan1972depth}]\label{lemma:dfstrees}
	Let $ G=(V,E) $ be a graph and let $ T $ be a spanning tree of $ G $. Then $ T $ is an \cl-tree of $ G $ generated by DFS if and only if for each edge $ uv \in E $ it holds that either $ u $ is an ancestor of $ v $ in $ T $ or $ v $ is an ancestor of $ u $ in $ T $.
\end{lemma}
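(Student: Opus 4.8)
The plan is to prove the two directions separately, using the four point property of DFS orders (Lemma~\ref{lemma:3point-dfs}) and the definition of the \cl-tree (Definition~\ref{def:tree}) as the two main tools. Throughout I will use the immediate fact that in an \cl-tree the parent of a vertex precedes it in $\sigma$, so that ancestors always precede their descendants in the order.

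For the \emph{if} direction, assume every edge of $G$ joins an ancestor--descendant pair in $T$. I would exhibit a DFS order with \cl-tree $T$ by taking $\sigma$ to be any \emph{preorder} (depth-first) traversal of the rooted tree $T$ starting at $s$. First I would check that $\sigma$ is a DFS order via Lemma~\ref{lemma:3point-dfs}: given a triple $a \prec_\sigma b \prec_\sigma c$ with $ac \in E$ and $ab \notin E$, the assumption forces $a$ to be an ancestor of $c$, and the interval property of preorder traversals (the subtree of any vertex occupies a contiguous block beginning with that vertex) makes $b$ a proper descendant of $a$; then the parent $p(b)$ satisfies $a \prec_\sigma p(b) \prec_\sigma b$ and $p(b)b \in E$, so $d = p(b)$ witnesses the property. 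Second, I would verify that the \cl-tree of $\sigma$ is $T$: the neighbors of a vertex $v$ that precede it are, by hypothesis, exactly some of its ancestors, and among ancestors the deepest one --- namely the parent $p(v)$, which is always adjacent --- is the latest in preorder, so the rightmost earlier neighbor of $v$ is $p(v)$.

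For the \emph{only if} direction, assume $\sigma$ is a DFS order with \cl-tree $T$; the goal is to show that no edge is a \emph{cross edge}. This is the heart of the argument. I would prove by induction on $i$ the invariant that at step $i$ every already-visited vertex that is not an ancestor of $v_i$ has all of its neighbors among $v_1,\dots,v_i$ --- informally, only the root-to-$v_i$ path in $T$ stays \emph{active}. The key inductive step combines two ingredients: (i) the inductive hypothesis, which shows that the parent $w = p(v_{i+1})$, being an earlier neighbor of $v_{i+1}$, must lie on the active path of $v_i$, so that passing from $v_i$ to $v_{i+1}$ only deactivates the vertices of that path strictly below $w$; and (ii) for each such deactivated vertex $x$, a hypothetical neighbor $z$ appearing after $v_{i+1}$ would, through Lemma~\ref{lemma:3point-dfs} applied to $x \prec_\sigma v_{i+1} \prec_\sigma z$ (note $x v_{i+1}\notin E$, since $x$ lies to the right of the rightmost earlier neighbor $w$), produce a neighbor of $v_{i+1}$ lying strictly to the right of $w$, contradicting that $w$ is the \emph{rightmost} earlier neighbor of $v_{i+1}$.

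Once the invariant is established, the conclusion follows quickly: for an edge $uv$ with $u \prec_\sigma v = v_i$, the invariant applied at step $i-1$ places $u$ on the active path of $v_{i-1}$, and the same rightmost-neighbor argument rules out $u$ lying strictly below $w = p(v_i)$ (otherwise $w \prec_\sigma u$, contradicting $u \preceq_\sigma w$), so $u$ is at or above $w$ and hence an ancestor of $v$. The main obstacle is precisely formulating and maintaining this active-path invariant: the subtlety is that the \cl-tree is defined only through the \emph{rightmost} earlier neighbor, so every step must reconcile that purely order-theoretic definition with the stack-like ancestor structure one expects from DFS, and it is exactly the rightmost-neighbor condition that powers the contradictions above.
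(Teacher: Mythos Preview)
The paper does not give its own proof of this lemma; it is simply cited from Tarjan~\cite{tarjan1972depth} and used as a black box. Your proposal is a correct self-contained proof of the statement.

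Two remarks on your argument. For the \emph{if} direction, your verification that a preorder of $T$ is a DFS order of $G$ with \cl-tree $T$ is essentially the same computation the paper carries out in the proof of Lemma~\ref{lemma:tree} (there the hypothesis is ``$T$ is an \cl-tree of some DFS'', but by the very lemma you are proving that is equivalent to the ancestor--descendant condition). One small point worth making explicit: when you set $d=p(b)$ you need $p(b)\neq a$, and this follows because $p(b)=a$ would give $ab\in E$, contrary to assumption. For the \emph{only if} direction, your active-path invariant is the standard way to recover Tarjan's palm-tree property from the purely order-theoretic description of DFS; the inductive step is correct because (a) ancestors in the \cl-tree always precede their descendants in $\sigma$, so a deactivated vertex $x$ strictly below $w=p(v_{i+1})$ satisfies $w\prec_\sigma x$, hence $xv_{i+1}\notin E$ by the choice of $w$, and (b) the vertex $d$ produced by Lemma~\ref{lemma:3point-dfs} then lies strictly to the right of $w$ while still being adjacent to $v_{i+1}$, contradicting the same choice. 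The final step, showing an arbitrary earlier neighbor $u$ of $v_i$ lies at or above $w$ on the active path of $v_{i-1}$, is likewise correct.
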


In order to make sure that Algorithm~\ref{algo:ordering} returns a DFS order of $ G $ with \cl-tree $ T $, we prove the following statement.

\begin{lemma}\label{lemma:tree}
Let $T$ be an $\cl$-tree of some DFS on $G$ rooted in $s$ and let $\sigma$ be a DFS order of $ T $ starting at $ s $. Then $\sigma$ is a DFS order of $G$ with $\cl$-tree $T$.
\end{lemma}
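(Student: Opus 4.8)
The plan is to verify both assertions bundled in the statement: first that $\sigma$ is a DFS order of $G$ (and not merely of $T$), and second that the \cl-tree of $\sigma$ taken in $G$ is exactly $T$. The single structural fact driving everything is Tarjan's characterization (Lemma~\ref{lemma:dfstrees}): since $T$ is an \cl-tree of DFS on $G$, every edge $uv \in E(G)$ joins a pair that stands in ancestor--descendant relation in $T$. I will pair this with two elementary properties of the given DFS order $\sigma$ of the tree $T$: in a DFS an ancestor is always visited before its descendants (the pre-order property), and the vertex set of any subtree occupies a contiguous block of $\sigma$ (the interval property of DFS on a tree).

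For the first claim I would apply the four point property of DFS orders (Lemma~\ref{lemma:3point-dfs}). Take any triple $a \prec_\sigma b \prec_\sigma c$ with $ac \in E(G)$ and $ab \notin E(G)$; I must exhibit $d$ with $a \prec_\sigma d \prec_\sigma b$ and $db \in E(G)$. Because $ac \in E(G)$, Lemma~\ref{lemma:dfstrees} puts $a$ and $c$ in ancestor--descendant relation in $T$, and since $a \prec_\sigma c$ the pre-order property forces $a$ to be the ancestor of $c$. The interval property then makes $b$, lying between $a$ and $c$, a proper descendant of $a$. Let $d$ be the parent of $b$ in $T$. Then $db \in E(T) \subseteq E(G)$, and $d \neq a$ since otherwise $ab = db \in E(G)$, contradicting $ab \notin E(G)$; hence $d$ is a proper descendant of $a$ and a proper ancestor of $b$, which by pre-order gives $a \prec_\sigma d \prec_\sigma b$. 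This $d$ witnesses the property, so $\sigma$ is a DFS order of $G$.

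For the second claim I would show that for every $v \neq s$ the parent $p$ of $v$ in $T$ is precisely the rightmost $G$-neighbor of $v$ among the vertices preceding $v$ in $\sigma$, which is exactly the \cl-tree rule of Definition~\ref{def:tree}. Certainly $p$ qualifies, as $pv \in E(T) \subseteq E(G)$ and $p \prec_\sigma v$ by pre-order. To see it is the rightmost such neighbor, let $u$ be any $G$-neighbor of $v$ with $u \prec_\sigma v$. By Lemma~\ref{lemma:dfstrees} the pair $u,v$ is in ancestor--descendant relation, and $u \prec_\sigma v$ makes $u$ a proper ancestor of $v$. Along the ancestor path of $v$ the pre-order property lists the vertices as root, \ldots, grandparent, parent $p$, so $p$ is the last proper ancestor to appear in $\sigma$; thus $u \preceq_\sigma p$. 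Hence no neighbor of $v$ lies strictly between $p$ and $v$, and $p$ is the rightmost prior neighbor, so the \cl-tree of $\sigma$ reconstructs $T$ edge by edge.

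Once Tarjan's lemma is in hand the argument is largely mechanical, so I do not anticipate a single hard obstacle. The point most easily overlooked, and the one I would treat most carefully, is the tree-equality step: one must rule out that some other prior neighbor of $v$ (necessarily an ancestor, by Lemma~\ref{lemma:dfstrees}) sits to the right of the parent, and this is settled only by the observation that among all proper ancestors it is exactly the parent that occurs rightmost in a DFS order. The supporting interval property used to locate $b$ inside the subtree of $a$ is the other place needing a justifying sentence, though it is standard for DFS on a tree.
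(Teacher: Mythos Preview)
Your proof is correct and follows essentially the same approach as the paper: both parts rest on Tarjan's characterization (Lemma~\ref{lemma:dfstrees}) together with the four-point property of DFS (Lemma~\ref{lemma:3point-dfs}), and the tree-equality step is the same observation that the parent is the rightmost prior neighbor. Your handling of the first part is in fact a bit cleaner than the paper's, which splits into cases according to whether $ac\in E(T)$ and whether $b$ lies on the $a$--$c$ path in $T$; by invoking the interval property and taking $d$ to be the parent of $b$ you cover all these cases uniformly.
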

\begin{proof}
We show that $\sigma$ is a DFS order of $G$ by proving that it fulfills the characterization given in Lemma~\ref{lemma:3point-dfs}. Let $a$, $b$ and $c$ be three vertices in $G$ with $a \prec_\sigma b \prec_\sigma c$, $ac \in E(G)$ and $ab \notin E(G)$. We have to show that there is a vertex $ d $ with $ a \prec_\sigma d \prec_\sigma b $ such that $ db \in E(G) $. Assume that $ac \in E(T)$. As $\sigma$ is a DFS order of the tree $T$, there is a vertex $d$ with $a \prec_\sigma d \prec_\sigma b$ and $db \in E(T) \subseteq E(G)$ due to Lemma~\ref{lemma:3point-dfs}. Therefore, we can assume that $ac$ is not contained in $E(T)$. By Lemma~\ref{lemma:dfstrees} this implies that $a$ is an ancestor of $c$ in $T$. Let $P = (a = w_1, \ldots, w_k = c)$ be the unique path between $a$ and $c$ in $T$. As $\sigma$ is a DFS order of $T$, it holds that $a \prec_\sigma w_2 \prec_\sigma \ldots \prec_\sigma c$. If there is an $1 < i < k$ such that $w_i = b$, then $w_{i-1}$ is a vertex between $a$ and $b$ in $\sigma$ with $w_{i-1}b \in E(G)$. Otherwise, there exists an $1 \leq i < k$ with $w_i \prec_\sigma b \prec_\sigma w_{i+1}$ and $w_iw_{i+1} \in E(T)$. As in the first case there exists a vertex $d$ with $a \preceq_\sigma w_i \prec_\sigma d \prec_\sigma b$ and $db \in E(T) \subseteq E(G)$. By Lemma~\ref{lemma:3point-dfs}, this proves that $\sigma$ is a DFS order.

Let $T'$ be the \cl-tree of $\sigma$ with regard to $ G $ and assume for contradiction that there is an edge $uv$ in $T'$ that is not part of $T$. Due to Lemma~\ref{lemma:dfstrees}, we can assume without loss of generality that $u$ is an ancestor of $v$ in $T$. Since $uv$ is not part of $T$, vertex $u$ is not the parent of $v$ in $T$. Let $w$ be the parent of $v$ in $T$. Note that this means that $w$ is a descendant of $u$ in $T$. Since $ \sigma $ is a DFS order on $ T $, vertex $w$ must be to the left of $v$ and to the right of $u$ in $\sigma$. Since $vw \in E(G)$, edge $uv$ cannot be part of $T'$, as $T'$ is the $\cl$-tree of $\sigma$; a contradiction.
\end{proof}

With these results on DFS we can proceed to the analysis of Algorithm~\ref{algo:ordering}. First we will prove correctness.

\begin{theorem}\label{thm:tree-ordering}
 Let $T$ be an $\cl$-tree of some DFS on $G$ rooted in $s$ and let $ \rho $ be an arbitrary order of $ V $ ending in $ s $. Let $\sigma$ be the order produced by Algorithm~\ref{algo:ordering} with input $(G,T,s,\rho)$. Then $T$ is an $\cl$-tree of LexDFS rooted in $s$ if and only if $\sigma$ is a LexDFS order of $G$.
\end{theorem}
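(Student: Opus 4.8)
The plan is to prove both directions on top of one preliminary fact that does not reference LexDFS at all: the order $\sigma$ returned by Algorithm~\ref{algo:ordering} is always a DFS order of $G$ whose $\cl$-tree is exactly $T$. By Lemma~\ref{lemma:tree} it suffices to verify that $\sigma = \mathrm{DFS}^+(\tau)$ is a DFS order of the \emph{tree} $T$ starting at $s$. The crucial point is that the tie-break order $\tau$ built in lines~\ref{line:begin1st}--\ref{line:end1st} forces the search to descend along tree edges: whenever $\mathrm{DFS}^+(\tau)$ sits at a vertex $v$ that still has unvisited tree-children, the rightmost unvisited neighbor of $v$ with respect to $\tau$ must be one of those children. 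I would establish this by showing that refining $\mathcal{Q}$ with the down-neighborhoods $\{w \in N(v) : w \prec_\beta v\}$ in the reverse BFS order $\beta$ (leaves first) places each subtree into a contiguous block of $\mathcal{Q}$ and nests these blocks consistently with the ancestor relation in $T$; after reversing the final partition, each subtree lies to the right of the edges leaving it, which is precisely the condition that makes $\mathrm{DFS}^+$ follow $T$. Feeding this into Lemma~\ref{lemma:tree} yields that $\sigma$ is a DFS order of $G$ with $\cl$-tree $T$. Note also that any witness order will likewise be a DFS traversal of $T$, so that $\sigma$ and such a witness agree on every ancestor--descendant relation and differ only in the left-to-right order in which sibling subtrees are explored.

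Given this preliminary fact, the direction ``$\sigma$ is a LexDFS order $\Rightarrow$ $T$ is an $\cl$-tree of LexDFS'' is immediate: by Definition~\ref{def:tree} the tree $T$ is an $\cl$-tree of LexDFS rooted in $s$ as soon as some LexDFS order starting at $s$ has $\cl$-tree $T$, and $\sigma$ is exactly such an order.

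For the converse, assume $T$ is an $\cl$-tree of LexDFS, witnessed by a LexDFS order $\pi$ of $G$ with $\cl$-tree $T$. I would verify that $\sigma$ satisfies the four-point property of Lemma~\ref{lemma:3point-lexdfs}. Take a triple $a \prec_\sigma b \prec_\sigma c$ with $ac \in E$ and $ab \notin E$; since $\sigma$ is a DFS order, $a$ is an ancestor of $c$ by Lemma~\ref{lemma:dfstrees}, and because $\sigma$ is a DFS traversal of $T$ the vertex $b$ must be a descendant of $a$. Lemma~\ref{lemma:3point-dfs} already supplies some $d$ with $a \prec_\sigma d \prec_\sigma b$ and $db \in E$ (for instance the tree-parent of $b$, which is a proper descendant of $a$ and hence strictly between $a$ and $b$); what remains is to produce such a $d$ that additionally satisfies $dc \notin E$. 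This is where the LexDFS realizability of $T$ must enter, since the parent of $b$ may well be adjacent to $c$. Here I would transport the triple $a,b,c$ to the witness order $\pi$, apply the LexDFS four-point property there to obtain a neighbor $d'$ of $b$ with $d'c \notin E$ lying between $a$ and $b$ in $\pi$, and then argue that a vertex with the same separating behavior occurs between $a$ and $b$ in $\sigma$ as well.

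The hard part will be exactly this last transfer step: relocating the ``$dc \notin E$'' witness guaranteed by $\pi$ to the algorithm's order $\sigma$. Because $\sigma$ and $\pi$ share the tree $T$ but may arrange sibling subtrees differently, I would have to show that the arrangement chosen by $\tau$ — governed by the partition classes of $\mathcal{Q}$ together with the tie-break $\rho^-$ — never destroys the separation that makes $T$ a LexDFS tree. Concretely, I anticipate proving that the refinement records, for the vertices involved, precisely which neighbors are shared, so that any sibling ordering consistent with $\mathcal{Q}$ (in particular the one induced by $\rho^-$, for \emph{arbitrary} $\rho$) keeps a non-neighbor of $c$ immediately to the left of $b$. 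A viable alternative, which may sidestep the case analysis inherent in the four-point argument, is to proceed inductively along the run of $\mathrm{DFS}^+(\tau)$ and show that the vertex it selects at each step carries the lexicographically largest LexDFS label among the unvisited vertices; this would establish directly that $\sigma$ equals $\mathrm{LexDFS}^+(\sigma^-)$, whence $\sigma$ is a LexDFS order by Observation~\ref{obs:lexdfs+}, at the cost of a careful bookkeeping of the labels against the partition $\mathcal{Q}$.
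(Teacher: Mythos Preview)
Your easy direction matches the paper. One caveat on the preliminary fact: the paper (see the worked example and the phrase ``as both are children of $p$'' in the proof) effectively treats the final $\mathrm{DFS}^+(\tau)$ as run on $T$, so Lemma~\ref{lemma:tree} applies immediately and no subtree-block argument is needed. If one literally runs $\mathrm{DFS}^+(\tau)$ on $G$, your claim that the rightmost unvisited $G$-neighbor is always a tree-child fails already on a triangle $s,a,b$ with $T=\{sa,ab\}$: the algorithm produces $\tau=(a,b,s)$ and $\mathrm{DFS}^+(\tau)$ on $G$ picks $b$ from $s$. So that line of attack would not go through as stated.

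For the hard direction the paper takes a different and noticeably shorter route than either of your proposals. Instead of the four-point condition or label tracking, it uses an exchange argument: among all LexDFS orders with $\cl$-tree $T$, choose $\sigma^*$ whose common prefix with $\sigma$ is longest, and let $i$ be the first index where $v=\sigma(i)\neq\sigma^*(i)=v^*$. Both orders are DFS traversals of $T$, so $v$ and $v^*$ share the same tree-parent $p$. If they have the same neighborhood in the prefix $S=\{\sigma(j):j<i\}$, then LexDFS could equally well have taken $v$ instead of $v^*$ without changing the $\cl$-tree, contradicting maximality of the prefix. Otherwise there is a rightmost $w\in S$ with $wv^*\in E$ and $wv\notin E$; this $w$ is an ancestor of both, hence to their right in $\beta$, and up to the iteration for $w$ the refinement treats $v$ and $v^*$ identically. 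At that iteration $v^*$ is pulled into a set strictly left of $v$ in $\mathcal{Q}$, hence strictly right of $v$ in $\tau$, so $\mathrm{DFS}^+(\tau)$ at $p$ would choose $v^*$ before $v$ --- a contradiction.

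This bypasses your ``hard transfer step'' entirely: only a single distinguishing ancestor $w$ is needed, and its effect is read directly off $\mathcal{Q}$. Your four-point route has a concrete obstacle you did not close: after carrying the triple $a,b,c$ to the witness order $\pi$, nothing guarantees $b\prec_\pi c$ (they may lie in sibling subtrees that $\pi$ explores in the opposite order), so Lemma~\ref{lemma:3point-lexdfs} is not even available for that triple in $\pi$. Your inductive alternative is closer in spirit to what the paper does, but the maximal-prefix exchange is exactly the device that removes the label bookkeeping.
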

 \begin{proof}
  Assume $\sigma$ is a LexDFS order of $G$. Due to Lemma~\ref{lemma:tree}, the $\cl$-tree of $\sigma$ is $T$ and, therefore, it is an $\cl$-tree of LexDFS.

  For the other direction, assume that $T$ is an $\cl$-tree of LexDFS. Let $\sigma^*$ be a LexDFS order of $G$ such that the $\cl$-tree of $\sigma^*$ is $T$ and the common prefix of $\sigma$ and $\sigma^*$ is maximal among all LexDFS orders with $\cl$-tree $T$. If $ \sigma $ and $ \sigma^* $ are equal, then we are done. Otherwise let $i \in \{1,\ldots, n\}$ be the first index for which $v = \sigma(i) \neq \sigma^*(i) = v^*$. By Lemma~\ref{lemma:tree}, both $\sigma$ and $\sigma^*$ are DFS orders with $\cl$-tree $T$ and $v$ and $v^*$ have the same parent $ p $ in $T$. If $v$ and $v^*$ have the same neighborhood in the set $S = \{\sigma(j) ~|~ j < i\}$, then $v$ could have been taken by LexDFS instead of $v^*$ and this choice would not have had an impact on the $\cl$-tree of the order, due to Lemma~\ref{lemma:dfstrees}. Hence, there must be a vertex $w \in S$ with $wv^* \in E(G)$ and $wv \notin E(G)$ and for all vertices $x$ with $w \prec_\sigma x \prec_\sigma v$ it holds that both $v$ and $v^*$ are adjacent to $x$ or both are not adjacent to $x$. Note that $w$ is an ancestor of both $v$ and $v^*$ in $T$ and therefore, it is to the right of both vertices in $\beta$.
  
  However, this means that before the iteration of the \texttt{for}-loop in lines \ref{line:for1}--\ref{line:for2}, where we consider vertex $w$, both $v$ and $v^*$ are in the same set of $\cal Q$. After this iteration, vertex $v^*$ is in a set of $ \cal Q $ to the left of the set containing $v$. Therefore, $v^*$ is to the right of $v$ in $\tau$, as $ \tau $ uses the reverse order of $ \cal Q $. Thus, the search DFS$ ^+(\tau) $ visits $ v^* $ before $ v $, as both are children of $ p $; a contradiction to $ v $ being to the left of $ v^* $ in $ \sigma $.
 \end{proof}
 
As seen in the proof, it is not necessary to use the reverse of a BFS order for $\beta$. Any order will suffice, where for every vertex $w$ all ancestors in $T$ are to the right of $w$ in the order. Having shown that Algorithm~\ref{algo:ordering} returns a correct LexDFS order for any \cl-tree of LexDFS, we will now evaluate its running time.

\begin{lemma}\label{lemma:running-time-ordering}

 Algorithm~\ref{algo:ordering} has running time in $\O(n+m)$.
\end{lemma}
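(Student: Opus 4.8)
The plan is to bound each of the three phases of Algorithm~\ref{algo:ordering} separately and verify that each runs in $\O(n+m)$ time, so that their sum does too. The three phases are: the construction of $\beta$ (line~\ref{line:begin1st}), the partition-refinement loop (lines~\ref{line:for1}--\ref{line:for2}) together with the final reordering that produces $\tau$ (up to line~\ref{line:end1st}), and the concluding call to DFS$^+(\tau)$. For the first phase, a single BFS on the tree $T$ visits $n$ vertices and $n-1$ edges and can be reversed in $\O(n)$ time, so $\beta$ is obtained in $\O(n)$. For the last phase, I would invoke the standard fact that DFS$^+$ with a fixed tie-break order $\tau$ runs in linear time $\O(n+m)$, since it is just an ordinary DFS that consults the precomputed priority list $\tau$ whenever it must choose among equally eligible vertices.

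The heart of the argument is the middle phase, and this is where I expect the main obstacle to lie. The difficulty is that a naive implementation of partition refinement could spend time proportional to $|N(v)|$ for every vertex $v$ touched, but one must be careful that the total work over all $n$ iterations of the \texttt{for}-loop stays linear. The key observation is that in the $i$-th iteration we refine $\mathcal{Q}$ using only the set $\{w \in N(v) \mid w \prec_\beta v\}$, i.e. the neighbors of $v$ that precede it in $\beta$. Summing the sizes of these refining sets over all iterations counts each edge of $G$ at most once (each edge $uv$ contributes to the refinement step of exactly its $\beta$-later endpoint), so the total size of all refining sets is at most $m$. I would then rely on the standard amortized analysis of partition refinement (as used for LexBFS in~\cite{habib2000lex}): with the usual doubly-linked-list representation of partition classes and a pointer from each vertex to its current class, a single refinement with a set $S'$ can be performed in time $\O(|S'|)$, because only the classes actually intersected by $S'$ are split and each such split touches only the elements of $S'$. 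Hence the whole loop costs $\O(n + \sum_i |\{w \in N(v) \mid w \prec_\beta v\}|) = \O(n+m)$.

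It remains to account for the post-processing in the step that orders every set of $\mathcal{Q}$ with respect to $\rho^-$, moves $\{s\}$ to the front, and reverses the result to obtain $\tau$. Reordering the elements within the classes according to a fixed global order $\rho$ can be done in a single sweep: I would pass once through $\rho$ (equivalently $\rho^-$) and place each vertex into its current partition class in the induced order, which takes $\O(n)$ total time since each vertex is handled once and the class of each vertex is known via its pointer. Moving $\{s\}$ to the leftmost position and reversing the final concatenated order are both $\O(n)$ operations on lists. Summing the three phases yields an overall running time of $\O(n+m)$, establishing the lemma. The only subtlety worth stating explicitly in the full proof is the charging argument that bounds $\sum_i |\{w \in N(v) \mid w \prec_\beta v\}|$ by $m$; everything else reduces to the textbook linear-time behaviour of BFS, partition refinement, and DFS$^+$.
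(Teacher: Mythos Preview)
Your proposal is correct and follows essentially the same approach as the paper's proof: both bound the BFS, the partition-refinement loop, the $\rho^-$-reordering, and the final DFS$^+(\tau)$ separately in $\O(n+m)$. Your write-up is considerably more detailed (the charging argument for the refinement sets and the pointer-based class representation are spelled out, whereas the paper just says one iterates through each neighborhood once), and the only place where the paper is slightly more explicit is the linear-time implementation of DFS$^+(\tau)$, which it achieves by first sorting all adjacency lists with respect to $\tau$ rather than ``consulting'' $\tau$ at each tie.
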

\begin{proof}
 Algorithm~\ref{algo:ordering} begins with an execution of BFS which can be done in linear time. In the \texttt{for}-loop
 we iterate through the neighborhood of every vertex exactly once. Thus, the overall costs are in $\O(n+m)$. To sort the sets of $\cal Q$ with respect to $\rho^-$ we iterate through $\rho^-$ and move the considered vertex to the end of its set. The final DFS$^+(\tau)$ can be executed in linear time by first sorting the neighborhoods of all vertices with respect to $\tau$. 
\end{proof}

The last results imply that the construction of an \cl-tree of LexDFS is linear time equivalent to the computation of a LexDFS order.

\begin{theorem}
 For a given graph family $ \cal G $ and $\O(M) \supseteq \O(n+m)$ the following two statements are equivalent:
 \begin{enumerate}
 	\item There is an algorithm with running time in $\O(M)$ that computes a LexDFS order for any graph in $ \cal G $ and any starting vertex $s$.
 	\item There is an algorithm with running time in $\O(M)$ that creates an $\cl$-tree of LexDFS for any graph in $\cal G$ and any root $s$.
 \end{enumerate}
\end{theorem}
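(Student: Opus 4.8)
The plan is to establish the equivalence by exhibiting linear-time reductions in both directions, each of which simply composes the assumed algorithm with results already proved in this section. Since $\O(M) \supseteq \O(n+m)$, any procedure running in $\O(n+m)$ can be appended to an $\O(M)$ algorithm without changing the asymptotic bound, and this is precisely the feature that makes both reductions go through. I would also note that both tasks are parameterised by a designated vertex $s$ (as start vertex, respectively as root), and the reductions preserve this choice, so the equivalence will hold uniformly in $s$.

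For the implication $(1) \Rightarrow (2)$, I would start from an $\O(M)$ algorithm producing a LexDFS order $\sigma$ starting at a given $s$ and then build its $\cl$-tree according to Definition~\ref{def:tree}. The only thing to check is that constructing the $\cl$-tree from $\sigma$ costs $\O(n+m)$: one scan of $\sigma$ records each vertex's position in the order, and then for every vertex $v_i$ a single pass over $N(v_i)$ suffices to determine its rightmost earlier neighbor, that is, its parent in the $\cl$-tree. The resulting tree is rooted at $s = \sigma(1)$ and is by definition an $\cl$-tree of LexDFS, so the combined running time is $\O(M) + \O(n+m) = \O(M)$.

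For the converse $(2) \Rightarrow (1)$, I would invoke Algorithm~\ref{algo:ordering} as the post-processing step. Given the assumed $\O(M)$ algorithm, first create an $\cl$-tree $T$ of LexDFS rooted at the prescribed $s$. Then fix any order $\rho$ of $V$ ending in $s$ — for instance, list the remaining vertices in arbitrary order and append $s$, which takes $\O(n)$ time — and run Algorithm~\ref{algo:ordering} on $(G,T,s,\rho)$. Since $T$ is an $\cl$-tree of LexDFS, Theorem~\ref{thm:tree-ordering} guarantees that the returned order $\sigma$ is a LexDFS order of $G$ starting at $s$, and Lemma~\ref{lemma:running-time-ordering} guarantees that this step runs in $\O(n+m)$. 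Hence the total time is again $\O(M) + \O(n+m) = \O(M)$.

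I do not expect a genuine obstacle here: all of the combinatorial content has already been absorbed into Theorem~\ref{thm:tree-ordering} and Lemma~\ref{lemma:running-time-ordering}, so the remaining work is essentially bookkeeping about composing algorithms. The one point that deserves explicit attention is the linear-time conversion between an order and its $\cl$-tree used in the first direction, together with the observation that feeding an arbitrary admissible $\rho$ into Algorithm~\ref{algo:ordering} in the second direction is legitimate — both of which are routine once the preceding results are in hand.
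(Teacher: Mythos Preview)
Your proposal is correct and follows essentially the same route as the paper's own proof: for $(1)\Rightarrow(2)$ you build the $\cl$-tree from the order in $\O(n+m)$, and for $(2)\Rightarrow(1)$ you feed the tree into Algorithm~\ref{algo:ordering} and appeal to Theorem~\ref{thm:tree-ordering} and Lemma~\ref{lemma:running-time-ordering}. The only difference is that you spell out the $\O(n+m)$ construction of the $\cl$-tree and the choice of an admissible $\rho$, whereas the paper leaves both implicit.
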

\begin{proof}
 It is easy to see that the $\cl$-tree of an arbitrary vertex order can be constructed in $\O(n+m)$. Therefore, an $\O(M)$-algorithm for the computation of a LexDFS order starting at $s$ directly implies an $\O(M)$-algorithm for the creation of an $\cl$-tree of LexDFS rooted in $s$.
 
 If, on the other hand, we can compute an $\cl$-tree of LexDFS rooted in $s$ in time $\O(M)$, then we can use Algorithm~\ref{algo:ordering} to create a corresponding LexDFS order in linear time, due to Theorem~\ref{thm:tree-ordering} and Lemma~\ref{lemma:running-time-ordering}.
\end{proof}

This linear time equivalence does not only hold for the computation but also for the recognition of $\cl$-trees and orders of LexDFS. To prove this we need the following two technical lemmas.

\begin{lemma}\label{lemma:rho}
 	Let $T$ be an $\cl$-tree of a DFS on $G$ rooted in $s$, let $\rho$ be an arbitrary order of $V$ ending with $s$ and let $\sigma$ be the order produced by Algorithm~\ref{algo:ordering} with input $(G,T,s,\rho)$. Furthermore, let $v$ and $w$ be two vertices in $G$ with $ v \prec_\sigma w $ which have the same parent in $T$ and the same neighborhood in the set $Y=\{x~|~x \prec_\sigma v\}$. Then $v$ is to the right of $w$ in $\rho$.
\end{lemma}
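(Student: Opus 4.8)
The plan is to examine the single step at which the final search $\mathrm{DFS}^+(\tau)$ of Algorithm~\ref{algo:ordering} selects $v$, and to compare $v$ there with its sibling $w$. Throughout I use that $\sigma$ is a DFS order of $G$ whose $\cl$-tree is $T$, as established in the analysis of Algorithm~\ref{algo:ordering} (Lemma~\ref{lemma:tree}); in particular, every $T$-ancestor of a vertex precedes it in $\sigma$, and the vertex on top of the DFS stack when a vertex is discovered is its $T$-parent. The proof splits into two parts: first I show that the tie-break order satisfies $w\prec_\tau v$, and then I show that $v$ and $w$ lie in the same class of $\mathcal{Q}$, which lets me convert this $\tau$-relation into the desired $\rho$-relation.

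For the first part, consider the moment $\mathrm{DFS}^+(\tau)$ picks $v$. At that point the set of visited vertices is exactly $Y=\{x : x\prec_\sigma v\}$, and the vertex on top of the stack is the $T$-parent $p$ of $v$, which is also $v$'s rightmost neighbor in $Y$. Since $v$ and $w$ have the same neighborhood in $Y$, vertex $w$ is adjacent to $p$ as well, and $w$ is still unvisited because $v\prec_\sigma w$; hence $w$ is a legal candidate at this step. As $\mathrm{DFS}^+(\tau)$ chooses the candidate that is rightmost with respect to $\tau$ and it chose $v$, we obtain $w\prec_\tau v$.

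The heart of the matter is to show that $v$ and $w$ are never separated during the refinement loop in lines~\ref{line:for1}--\ref{line:for2}, so that they remain in a common class of $\mathcal{Q}$. They start together in the class $(V)$, and the step processing a vertex $u=\beta(i)$ separates them only if exactly one of them, say $v$, lies in the refining set $\{x\in N(u) : x\prec_\beta u\}$ while $w$ does not. From $v$ lying in this set we get $u\in N(v)$ and $v\prec_\beta u$. By Lemma~\ref{lemma:dfstrees}, $u$ is a $T$-ancestor or a $T$-descendant of $v$; but since $\beta$ is a reverse BFS order of $T$, in which every vertex precedes all of its ancestors, a $T$-descendant $u$ of $v$ would satisfy $u\prec_\beta v$, contradicting $v\prec_\beta u$. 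Hence $u$ is a proper $T$-ancestor of $v$, so $u\prec_\sigma v$ and therefore $u\in Y$. The same-neighborhood hypothesis then forces $u\in N(w)$; and as $v$ and $w$ share the parent $p$, any proper ancestor of $v$ is an ancestor of $w$, so $w\prec_\beta u$. But now $u\in N(w)$ and $w\prec_\beta u$ place $w$ in the refining set, contradicting our assumption. Thus no step separates $v$ and $w$, and since both differ from $s$, extracting $\{s\}$ does not split them either; they end in the same class of $\mathcal{Q}$.

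Finally, within every class of $\mathcal{Q}$ the vertices are arranged according to $\rho^-$, and $\tau$ is the reverse of the resulting order of $\mathcal{Q}$; since $v$ and $w$ share a class, reversal turns their $\rho^-$-order into their $\rho$-order, so $\tau$ and $\rho$ agree on this pair. Therefore $w\prec_\tau v$ gives $w\prec_\rho v$, that is, $v$ is to the right of $w$ in $\rho$, as required. I expect the genuine difficulty to be the case analysis of the third paragraph, where one must simultaneously exploit Lemma~\ref{lemma:dfstrees}, the fact that a reverse-BFS order lists every vertex before its ancestors, and the fact that $T$-ancestors precede their descendants in $\sigma$; the bookkeeping that turns ``$w$ avoids the refining set'' into a contradiction is the key point.
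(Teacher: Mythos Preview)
Your proof is correct and follows essentially the same route as the paper's: first deduce $w\prec_\tau v$ from the fact that $\mathrm{DFS}^+(\tau)$ selects $v$ before its sibling $w$, then show that every pivot pulling one of $v,w$ is a $T$-ancestor lying in $Y$ and hence pulls the other as well (so $v$ and $w$ stay in the same class of $\mathcal{Q}$), and finally translate $w\prec_\tau v$ into $w\prec_\rho v$ via the $\rho^-$-sorting of classes. Your argument is somewhat more explicit than the paper's—notably in justifying why a pivot $u$ with $v\prec_\beta u$ must be a proper $T$-ancestor (the paper just says it ``has a smaller distance to the root'')—but the structure and the key ideas are identical.
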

\begin{proof}
 	As $v$ is to the left of $w$ in $ \sigma $ by assumption, vertex $ v $ was taken before $ w $ in $ DFS^+(\tau) $. Since $ v $ and $ w $ have the same parent in $ T $, it holds that $ v $ is to the right of $ w $ in $ \tau $. If the vertex $v$ is pulled by a vertex $x$ in the \texttt{for}-loop of Algorithm~\ref{algo:ordering}, then $x$ is adjacent to $v$ and has a smaller distance to the root $s$ in $T$. By Lemma~\ref{lemma:dfstrees}, vertex $x$ is an ancestor of $v$ in $ T $ and $ x $ is to the left of $v$ in $\sigma$. As $ v $ and $ w $ have the same neighborhood in $ Y $, the vertex $x$ is also adjacent to $w$ and pulls it, too. This implies that $v$ and $w$ are in the same set of $\cal Q$ after the \texttt{for}-loop. Therefore, $ v $ has to be to the right of $ w $ in $ \rho $, as it is to the right of $ w $ in $ \tau $.
\end{proof}

\begin{lemma}\label{lemma:sorted}
  Let $T$ be an $\cl$-tree of LexDFS on $G$ rooted in $s$ and let $\sigma$ be a vertex order of a graph $G$ starting at $s$ whose corresponding $\cl$-tree is $T$. Algorithm \ref{algo:ordering} returns $\sigma$ for input $(G,T,s,\sigma^-)$ if and only if $\sigma$ is a LexDFS order of $G$.
\end{lemma}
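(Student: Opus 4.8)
The plan is to prove the two implications separately: the forward direction follows almost immediately from Theorem~\ref{thm:tree-ordering}, while the reverse direction requires a first-point-of-difference argument built on Lemma~\ref{lemma:rho}. Throughout, note that since $\sigma$ starts at $s$, the order $\rho = \sigma^-$ ends with $s$, so it is a legitimate input for Algorithm~\ref{algo:ordering}.

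For the forward direction, suppose Algorithm~\ref{algo:ordering} returns $\sigma$ on input $(G,T,s,\sigma^-)$. Since $T$ is an $\cl$-tree of LexDFS by hypothesis, Theorem~\ref{thm:tree-ordering} guarantees that the order produced by the algorithm is a LexDFS order of $G$. As this produced order is $\sigma$ itself, $\sigma$ is a LexDFS order, and no further work is needed.

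For the reverse direction, I would assume $\sigma$ is a LexDFS order and let $\sigma'$ denote the order returned by the algorithm on input $(G,T,s,\sigma^-)$. As established in the proof of Theorem~\ref{thm:tree-ordering}, $\sigma'$ is a DFS order of $G$ with $\cl$-tree $T$, and since $T$ is an $\cl$-tree of LexDFS, Theorem~\ref{thm:tree-ordering} shows $\sigma'$ is in fact a LexDFS order. I would then show $\sigma = \sigma'$ by contradiction: suppose not, and let $i$ be the first index with $v := \sigma(i) \neq \sigma'(i) =: v'$, so that both orders share the prefix $S = \{\sigma(1),\ldots,\sigma(i-1)\}$. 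Exactly as in the proof of Theorem~\ref{thm:tree-ordering}, because $\sigma$ and $\sigma'$ are DFS orders with the common $\cl$-tree $T$ agreeing on $S$, the vertices $v$ and $v'$ are children of one and the same vertex $p$ in $T$. The key observation is that after the prefix $S$ both searches are in identical LexDFS states: every unvisited vertex carries the same label, since the label depends only on $S$ and its order. As $\sigma$ selects $v$ and $\sigma'$ selects $v'$, and each search picks an unnumbered vertex of lexicographically largest label, both $v$ and $v'$ must attain this maximum and hence carry equal labels; that is, $v$ and $v'$ have the same neighborhood in $S$. Now I would invoke Lemma~\ref{lemma:rho} for the algorithm's output $\sigma'$ with tie-break $\rho = \sigma^-$, applied to the pair $(v',v)$: we have $v' \prec_{\sigma'} v$, the two vertices share the parent $p$ in $T$, and they have the same neighborhood in $\{x \mid x \prec_{\sigma'} v'\} = S$. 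The lemma then yields that $v'$ lies to the right of $v$ in $\sigma^-$, i.e.\ $v' \prec_\sigma v$, contradicting $v = \sigma(i) \prec_\sigma v'$. Hence $\sigma = \sigma'$, and the algorithm returns $\sigma$.

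The main obstacle is the step establishing that $v$ and $v'$ have the same neighborhood in $S$, which is precisely the hypothesis Lemma~\ref{lemma:rho} demands. This rests on both $\sigma$ and $\sigma'$ being genuine LexDFS orders facing identical label states at step $i$: the former holds by assumption, and the latter by Theorem~\ref{thm:tree-ordering} together with $T$ being an $\cl$-tree of LexDFS. Once this equality of neighborhoods is secured, Lemma~\ref{lemma:rho} supplies the remaining inequality and the contradiction is immediate. The only other point requiring care is the claim that $v$ and $v'$ have the same parent $p$ in $T$, but this is the same structural DFS fact already used in Theorem~\ref{thm:tree-ordering} and transfers verbatim.
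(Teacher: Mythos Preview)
Your proposal is correct and follows essentially the same approach as the paper: the forward direction is dispatched via Theorem~\ref{thm:tree-ordering}, and the reverse direction is a first-point-of-difference argument that establishes equal neighborhoods in the common prefix (using that both $\sigma$ and $\sigma'$ are LexDFS orders) and then invokes Lemma~\ref{lemma:rho} with $\rho=\sigma^-$ to reach a contradiction. The only cosmetic difference is that the paper derives ``same parent in $T$'' as a consequence of ``same neighborhood in $\sigma_i$'', whereas you obtain it first from the DFS/$\cl$-tree structure as in the proof of Theorem~\ref{thm:tree-ordering}; both routes are valid.
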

\begin{proof}
 If Algorithm~\ref{algo:ordering} returns $\sigma$ for input $(G,T,s,\sigma^-)$, then, by Theorem~\ref{thm:tree-ordering}, $\sigma$ is a LexDFS order of $G$. 
 
 Therefore, we assume that $\sigma$ is a LexDFS order and Algorithm~\ref{algo:ordering} returns the LexDFS order $\sigma^*$ for input $(G,T,s,\sigma^-)$ with $\sigma \neq \sigma^*$. Let $i \in \{1,\ldots,n\}$ be the first index where $v = \sigma(i) \neq \sigma^*(i) = v^*$ and let $\sigma_i$ be the prefix of the first $i-1$ elements of $\sigma$ (and $\sigma^*$). It follows that $v$ and $v^*$ have the same neighborhood in $\sigma_i$ and, thus, the same parent in $T$. Since $v$ is to the right of $v^*$ in $\sigma^-$ it follows from Lemma~\ref{lemma:rho} that $v$ must be to the left of $v^*$ in $\sigma^*$; a contradiction.
\end{proof}

Now we can prove the linear time equivalence of tree recognition and order verification for LexDFS.

\begin{theorem}\label{theo:recognition}
  For a given graph family $ \cal G $ and $\O(M) \supseteq \O(n+m)$ the following two statements are equivalent:
  \begin{enumerate}
 	\item There is an algorithm with running time in $\O(M)$ that checks for any graph $G$ in $ \cal G $ and any vertex $s$ whether a given order beginning in $ s $ is a LexDFS order of $G$.
 	\item There is an algorithm with running time in $\O(M)$ for any graph $G$ in $\cal G$ and any vertex $s$ that checks whether a given spanning tree rooted in $ s $ is an $\cl$-tree of LexDFS on $G$.
  \end{enumerate}
\end{theorem}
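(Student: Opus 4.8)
The plan is to establish both reductions by combining Theorem~\ref{thm:tree-ordering} and Lemma~\ref{lemma:sorted} with the linear running time of Algorithm~\ref{algo:ordering} guaranteed by Lemma~\ref{lemma:running-time-ordering}. Throughout I will use the basic containment that every LexDFS order is a DFS order and every $\cl$-tree of LexDFS is an $\cl$-tree of DFS, so that the precondition of Algorithm~\ref{algo:ordering} (namely, that its input tree is an $\cl$-tree of DFS on $G$) is automatically met whenever I invoke it on a candidate LexDFS tree.

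For the direction from (1) to (2), suppose we have an $\O(M)$-algorithm that verifies LexDFS orders starting at $s$. Given a spanning tree $T$ rooted in $s$, I would first check in linear time whether $T$ is an $\cl$-tree of DFS on $G$ using the ancestor condition of Lemma~\ref{lemma:dfstrees} (for instance, via an Euler-tour labeling of $T$ that makes each ancestor query constant time). If this test fails, then $T$ is not an $\cl$-tree of LexDFS either, and we answer ``no''. Otherwise $T$ is an $\cl$-tree of DFS, so I would run Algorithm~\ref{algo:ordering} on $(G,T,s,\rho)$ for an arbitrary order $\rho$ ending in $s$, obtaining an order $\sigma$ that starts at $s$. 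By Theorem~\ref{thm:tree-ordering}, $T$ is an $\cl$-tree of LexDFS if and only if $\sigma$ is a LexDFS order, so a single call to the order-verification oracle on $\sigma$ decides the question. The total cost is $\O(M)+\O(n+m)=\O(M)$.

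For the direction from (2) to (1), suppose we have an $\O(M)$-algorithm that recognizes $\cl$-trees of LexDFS rooted in $s$. Given an order $\sigma$ beginning in $s$, I would first construct its $\cl$-tree $T$ in linear time via Definition~\ref{def:tree}; since $\sigma$ begins in $s$, this tree is rooted in $s$. Next I would apply the tree-recognition oracle to $T$. If $T$ is not an $\cl$-tree of LexDFS, then $\sigma$ cannot be a LexDFS order, because the $\cl$-tree of any LexDFS order is by definition an $\cl$-tree of LexDFS; hence we answer ``no''. If $T$ is an $\cl$-tree of LexDFS, then the hypotheses of Lemma~\ref{lemma:sorted} are satisfied, so I would run Algorithm~\ref{algo:ordering} on $(G,T,s,\sigma^-)$ to obtain an order $\sigma^*$ and test whether $\sigma^*=\sigma$ in linear time. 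By Lemma~\ref{lemma:sorted}, this equality holds precisely when $\sigma$ is a LexDFS order. Again the total cost is $\O(M)$.

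The step I expect to be the most delicate is the correct sequencing of operations in the reduction from (2) to (1): Lemma~\ref{lemma:sorted} applies only once we already know that $T$ is an $\cl$-tree of LexDFS, so the tree-recognition oracle must be invoked before Algorithm~\ref{algo:ordering}, and the ``no'' answer in the failure case has to be justified separately using the observation that any LexDFS order induces an $\cl$-tree of LexDFS. The remaining care lies in confirming that each invocation of Algorithm~\ref{algo:ordering} is legitimate, which follows from the DFS-versus-LexDFS containment noted at the outset.
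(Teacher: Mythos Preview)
Your proposal is correct and matches the paper's own proof essentially step for step: in both directions you perform the same preprocessing (DFS-tree test, respectively $\cl$-tree construction), the same call to Algorithm~\ref{algo:ordering}, and invoke Theorem~\ref{thm:tree-ordering} and Lemma~\ref{lemma:sorted} exactly as the paper does. The only cosmetic difference is that the paper cites existing linear-time DFS-tree recognition algorithms where you sketch the Euler-tour ancestor test directly.
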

\begin{proof}
	Assume we have an algorithm $\cal A$ for the recognition of LexDFS orders with running time in $\O(M)$. For a given spanning tree $T$ of $G$ rooted in $s$ we first decide in linear time whether $T$ is an $\cl$-tree of DFS (see~\cite{hagerup1985biconnected,korach1989dfs}). If not, then it is not an $\cl$-tree of LexDFS. Otherwise, we execute Algorithm~\ref{algo:ordering} with the input $(G,T,s,\rho)$, where $ \rho $ is an arbitrary order of the vertices of $ G $, and get the vertex order $\sigma$ as result in  linear time. Due to Theorem~\ref{thm:tree-ordering}, $T$ is an $\cl$-tree of LexDFS if and only if $ \sigma $ is an LexDFS order of $ G $. We use $\cal A$ to decide this in time $\O(M)$.
	
	Now assume we have an algorithm $\cal A$ for the recognition of $\cl$-trees of LexDFS with running time in $\O(M)$ and get a vertex order $\sigma$ starting at $s$. We first create the $\cl$-tree $T$ of $\sigma$ in linear time and check whether $T$ is an $\cl$-tree of LexDFS in time $\O(M)$. If not, $\sigma$ is not an LexDFS order of $G$. Otherwise, we call Algorithm~\ref{algo:ordering} with input $(G,T,s,\sigma^-)$. Due to Lemma~\ref{lemma:sorted}, the resulting vertex order  is equal to $\sigma$ if and only if $\sigma$ is a LexDFS order of $G$.
\end{proof}
 
 Note that this result does not hold for search orders and their corresponding trees in general (if $\P \neq \NP$). Beisegel et al.~\cite{beisegel20??recognition,beisegel2019recognizing} show for example that the recognition problem of $\cf$-trees of both LexBFS and LexDFS is $\NP$-complete, whereas it is easy to recognize the corresponding orders.

 \section{LexDFS on Chordal Graphs}
 

We will now use the results of the last section to derive a linear time implementation of LexDFS for chordal graphs. We first show that LexBFS and LexDFS have the same set of \cl-trees on chordal graphs. This fact is also implied by a more general result in~\cite{beisegel20??recognition}. Since this work has not been published yet and we only need a special case here, we give an alternative proof for the sake of completeness.

In~\cite{berry2009maximal}, Berry et al.~show that a whole range of different graph search schemes share the same set of search orders on chordal graphs. Among these searches are variants of both LexDFS and LexBFS, called \emph{CompLexDFS} and \emph{CompLexBFS}, respectively. For these algorithms we replace line~\ref{alg:line:labelchoose:ldfs} in both Algorithm~\ref{alg:lexdfs} and~\ref{alg:lexbfs} by ``choose a component $C$ of the graph induced by the unnumbered vertices and take a vertex in $C$ with lexicographically largest label''.

\begin{lemma}[\cite{berry2009maximal}]\label{lemma:berry}
	For any chordal graph $ G $ a linear vertex order is a CompLexDFS order if and only if it is a CompLexBFS order.
\end{lemma}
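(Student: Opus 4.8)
The plan is to prove that the two order classes coincide by a step-by-step exchange argument that reduces everything to a local statement about a single component of the unvisited graph. Both CompLexDFS and CompLexBFS build the order by repeatedly choosing a component $C$ of $G-S$, where $S$ is the set of already numbered vertices, and then taking a vertex of $C$ with lexicographically largest label. The component choice is completely free and is the same freedom for both searches; the only possible difference lies in \emph{which} vertices of a fixed $C$ carry the lexicographically largest label, since LexDFS prepends while LexBFS appends the step counter. Hence it suffices to show that for every chordal $G$, every prefix $S$ occurring in such a search, and every component $C$ of $G-S$, the set of vertices of $C$ with largest LexDFS label equals the set with largest LexBFS label. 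Once this is established, an induction on the length of the common prefix shows that at each step the two searches admit exactly the same set of next vertices, so they generate the same set of orders.

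The first ingredient is a purely order-theoretic observation about labels. The label of a vertex $v$ is determined by the set of visiting times of its numbered neighbors $N(v)\cap S$. Comparing the two resulting decreasing integer sequences shows that if $N(u)\cap S \subsetneq N(w)\cap S$, then $w$ has the strictly larger label under \emph{both} the append (LexBFS) and the prepend (LexDFS) convention, while two vertices with equal numbered neighborhood always tie under both. Consequently, if within $C$ some vertex is adjacent to all of $N(C)\cap S$, then the inclusion-maximum numbered neighborhood $N(C)\cap S$ is attained, it strictly dominates every proper subset in both conventions, and the lexicographically largest vertices of $C$ are, for both searches, exactly those adjacent to all of $N(C)\cap S$. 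Thus the local statement reduces to the structural claim that such a \emph{full} vertex always exists.

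To guarantee a full vertex I would carry the following invariant along the search and prove it by induction on $|S|$: for every component $C$ of $G-S$, the set $N(C)\cap S$ is a clique. The base case $S=\{s\}$ is trivial. For a chordal graph this clique invariant forces a full vertex: if no vertex of $C$ saw all of the clique $N(C)\cap S$, one could pick $a,b\in N(C)\cap S$ (so $ab\in E$) together with an induced path through $C$ from a neighbor of $a$ to a neighbor of $b$, and this closes up to an induced cycle of length at least four, contradicting chordality. Given a full vertex, the chosen vertex $v^{*}$, being lexicographically largest, is full, so $N(v^{*})\cap S = N(C^{*})\cap S$. The invariant is then preserved: components other than $C^{*}$ are untouched, and for a component $C'$ of $C^{*}\setminus\{v^{*}\}$ we have $N(C')\cap S \subseteq N(C^{*})\cap S = N(v^{*})\cap S$, so $v^{*}$ is adjacent to all of $N(C')\cap S$ and $\{v^{*}\}\cup(N(C')\cap S)$ is again a clique containing $N(C')\cap(S\cup\{v^{*}\})$. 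This closes the induction and, with the reduction above, yields the equivalence.

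The hard part will be the structural claim in the inductive step: proving from chordality that a clique-shaped numbered neighborhood of a component is always seen in full by some vertex of that component. The clean case is a single induced cycle $a,v,v',b$ with $ab\in E$, which chordality forbids outright; the real work is to reduce the general case (an arbitrary induced path through $C$) to this one by a shortest-path or minimal-counterexample argument, and to verify that the prefixes actually produced by the search keep $N(C)\cap S$ a clique. Indeed, non-realizable configurations (where a component has incomparable, non-clique numbered neighborhoods) show that the clique property is precisely what the search maintains and what makes the BFS and DFS tie-breaking agree. I would also confirm the minor points that, since $G$ is connected and $S\neq\emptyset$, every component of $G-S$ is adjacent to $S$ (so a full vertex is a genuine neighbor and the largest label is nonempty), and that vertices with empty numbered neighborhood never interfere with the comparison.
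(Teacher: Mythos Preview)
The paper does not prove this lemma; it is quoted from Berry, Krueger and Simonet~\cite{berry2009maximal} without argument (there it is a special case of a more general statement about Maximal Label Searches). Your proposal is a correct, self-contained proof and follows what is essentially the standard route to this fact.

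All three reductions you make are sound. The component choice is indeed the same degree of freedom for both searches, so the question localises to one component $C$. Your label observation is correct in both conventions: if $N(u)\cap S\subsetneq N(w)\cap S$ then the decreasing sequence for $w$ strictly dominates that for $u$ lexicographically, regardless of whether one prepends step numbers or appends $n-i$; hence, as every vertex of $C$ satisfies $N(v)\cap S\subseteq N(C)\cap S$, the existence of a full vertex forces the lex-max set to be exactly the full vertices for both searches. The clique invariant is preserved exactly as you argue: a new component $C'\subseteq C^{*}\setminus\{v^{*}\}$ has $N(C')\cap S\subseteq N(C^{*})\cap S$ and is adjacent to $v^{*}$ (since $C^{*}$ is connected), while $v^{*}$, being full, is adjacent to all of $N(C')\cap S$.

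Your outline of the structural step also goes through. Concretely: pick $v\in C$ with $|N(v)\cap K|$ maximum, where $K=N(C)\cap S$; let $a\in K\setminus N(v)$ and take a shortest path $v=x_0,\dots,x_p$ in $G[C]$ from $v$ to the set $N(a)\cap C$, so that $a$ is adjacent only to $x_p$ among the $x_i$. By maximality of $v$ there is $b\in (N(v)\cap K)\setminus N(x_p)$; let $q<p$ be the largest index with $bx_q\in E$. Then $b,x_q,x_{q+1},\dots,x_p,a,b$ is an induced cycle of length $p-q+3\ge 4$, contradicting chordality. This is exactly the ``minimal counterexample via shortest path'' argument you anticipated, and it needs no further case analysis.
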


We now show that both LexDFS and LexBFS compute the same $\cl$-trees as their respective Comp-variants for any graph.

\begin{lemma}\label{lemma:comp}
A spanning tree $T$ of a graph $G$ rooted in $s$ is an $\cl$-tree of LexDFS (LexBFS) on $G$ if and only if $T$ is an $\cl$-tree of CompLexDFS (CompLexBFS) on $G$.
\end{lemma}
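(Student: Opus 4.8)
The plan is to prove both equivalences at once: I treat the LexDFS / CompLexDFS case and note that the LexBFS / CompLexBFS case is identical, since the only difference between the two families is the label-update rule, which plays no role in the argument below. One direction is immediate. Whenever LexDFS picks the globally lex-largest unnumbered vertex $v$, this $v$ lies in some component $C$ of the graph induced by the unnumbered vertices and is in particular lex-largest within $C$; hence the very same vertex is an admissible choice for CompLexDFS. Thus every LexDFS order is a CompLexDFS order, and therefore every $\cl$-tree of LexDFS is an $\cl$-tree of CompLexDFS. The same reasoning shows that every LexBFS order is a CompLexBFS order.

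For the converse I would start from a CompLexDFS order $\pi$ whose $\cl$-tree is $T$ and transform it, \emph{without changing its $\cl$-tree}, into an order that makes the global (LexDFS) choice at every step; such an order is by definition a LexDFS order, and it still has $\cl$-tree $T$. The engine is a locality observation. At any stage with numbered set $S$, the components $C_1,\dots,C_k$ of $G-S$ are pairwise non-adjacent, so (i) the parent of any vertex in the $\cl$-tree lies in its own component or in $S$, and (ii) the label of a vertex of $C_j$ changes only when a vertex of $S$ or of $C_j$ is numbered. Consequently, if one reorders the tail of $\pi$ so that the whole subsequence belonging to a single component is numbered as one block, while keeping every individual component-subsequence and the prefix on $S$ in its original relative order, then for each vertex its rightmost earlier neighbour — which lies in its own component or in $S$ — is unaffected, and hence the $\cl$-tree is preserved.

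Using this I would run an exchange argument. Let $i$ be the first index at which $\pi$ does not take a globally lex-largest vertex; say $\pi$ takes $v$, lex-largest in its component $C_v$, while the global maximum label is attained by $u$ in a component $C_u\neq C_v$ (it must be different, because $v$ is already lex-largest inside $C_v$, so a global maximum in $C_v$ would force $v$ itself to be global-largest). Since $C_u$ is entirely unnumbered at step $i$ and no neighbour of a $C_u$-vertex is numbered before $\pi$ first enters $C_u$, the labels on $C_u$ are frozen between step $i$ and that entry; hence the first $C_u$-vertex numbered by $\pi$ carries the global maximum label. I then move the entire $C_u$-subsequence of $\pi$ to positions $i,i+1,\dots$, leaving everything else in order. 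By the locality observation the $\cl$-tree is unchanged; the untouched prefix of length $i-1$ keeps forcing exactly the same labels, so the new $i$-th vertex is a globally lex-largest choice. The index of the first deviation from LexDFS strictly increases, so after finitely many exchanges I obtain an order that is globally lex-largest at every step, i.e.\ a LexDFS order, still with $\cl$-tree $T$. This establishes that every $\cl$-tree of CompLexDFS is an $\cl$-tree of LexDFS; replacing ``prepend'' by ``append'' verbatim handles CompLexBFS and LexBFS.

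The delicate points, and the place I expect the real work, are the two halves of the locality observation and their persistence under repeated exchanges. First, one must argue carefully that reordering component blocks preserves the $\cl$-tree even though it shifts the absolute time-stamps in all labels; the point is that the outcome of every relevant label comparison depends only on the \emph{relative} order of the numbering events, and inserting or deleting the events of a non-adjacent component never reorders the events seen by a given vertex nor adds a new earlier neighbour to it. Second, one must check that the invariants used — each tail is an interleaving of the current components, and each entered component has frozen labels until entry — are maintained after every swap, so that the induction on the first-deviation index actually goes through; I would make this precise by re-verifying, after each exchange, that the untouched prefix forces the same labels and that moving a non-adjacent block changes no later vertex's set of earlier neighbours. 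I would also point out that, in contrast to LexDFS orders, a CompLexDFS order need not even be a DFS order — for instance $s,a,c,b,d$ on the graph with edges $sa,ab,sc,cd$ — which is precisely why the whole argument must be carried out at the level of $\cl$-trees rather than of orders.
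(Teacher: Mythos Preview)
Your proposal is correct and can be completed along the lines you indicate; the invariant that actually drives the induction is that each intermediate order remains a CompLexDFS order. This follows from the same locality observation you isolate: labels and components inside $C_u$ (respectively inside any other $C_j$) depend only on $S$ and on the already-numbered vertices of that same component, so block-reordering leaves every ``lex-largest in its component'' choice intact. Once that is stated explicitly, the first-deviation index strictly increases and the exchange terminates in a genuine LexDFS order with unchanged $\cl$-tree.

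The paper, however, takes a different route. Rather than an exchange argument, it produces a single candidate LexDFS order in one shot: given a CompLexDFS order $\sigma$ with $\cl$-tree $T$, it sets $\sigma^* := \text{LexDFS}^+(\sigma^-)$ and argues directly that $T$ is also the $\cl$-tree of $\sigma^*$. The key step is to show, by induction along $\sigma$, that for every vertex $v$ the component $C_\sigma(v)$ of $G$ minus the $\sigma$-predecessors of $v$ coincides with the component $C_{\sigma^*}(v)$ of $G$ minus the $\sigma^*$-predecessors of $v$, and that the labels on this common component agree at the moment $v$ is chosen in each search; equality of parents in the two $\cl$-trees then follows immediately. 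Both arguments ultimately rest on the same locality principle, but the paper's version is constructive and meshes with the $(+)$-tie-breaking machinery used throughout the rest of the paper, whereas your exchange argument is more self-contained and does not require any canonical tie-break rule.
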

\begin{proof}
Since every order of LexDFS is also an order of CompLexDFS, every $\cl$-tree of LexDFS on $G$ is also an $\cl$-tree of CompLexDFS.

For the reverse we first introduce some technical definitions. Let $ \tau =(w_1, \ldots , w_n) $ be some order of the vertices of $ G $. We define 
$ C_\tau(w_i) $ to be the connected component of $ G - \{w_1, \ldots, w_{i-1}\} $ containing $ w_i $. Now, consider a CompLexDFS order $\sigma$ of $G$ with $\cl$-tree $T$.  Let $\sigma^*$ be the LexDFS$^+(\sigma^-)$ order of $G$. We claim that $T$ is the $\cl$-tree of $\sigma^*$. To this end, we show that $ C_{\sigma}(v)=C_{\sigma^*}(v) $ for every vertex $v \in V$. Furthermore, we show that for every vertex $ w \in C_{\sigma}(v)=C_{\sigma^*}(v) $ it holds that the label of $ w $ at point where $v$ is chosen in $ \sigma $ is the same as the label of $ w $ when $ v $ is chosen in $ \sigma^* $.

Assume for contradiction that $v$ is the leftmost vertex in $\sigma$ which does not fulfill both of these properties. Let $w$ be the rightmost vertex in $\sigma$ with $w \prec_\sigma v$ such that $w$ has a neighbor in $C_{\sigma}(v)$. Due to choice of $v$, it holds that $ C_{\sigma}(w) = C_{\sigma^*}(w) $ and both components are labeled the same at the moment $ w $ is chosen in the respective search. As the labels cannot be changed from outside of the component, there must be a vertex in $C_{\sigma}(v)$ that is between $w$ and $v$ in $\sigma^*$. Let $x$ be the leftmost vertex in $\sigma^*$ with this property. At the point where $x$ is chosen by $ \sigma^*$ the labels of both $x$ and $v$ are the same as in $\sigma$ at the point when $v$ was chosen. Therefore, the labels of $x$ and $v$ must be the same at the point where $ x $ was chosen in $ \sigma^* $. However, $v$ is to the right of $x$ in $\sigma^-$ and it has to be chosen before $x$ in $ \sigma^* $; a contradiction. 

Now, let $T^*$ be the $\cl$-tree of $\sigma^*$ and assume that the parent of vertex $y$ in $T$ is $p$ and the parent of $y$ in $T^*$ is $p^* \neq p$. Due to the observation above, both $p$ and $p^*$ must be to the left of $y$ in both $\sigma$ and $\sigma^*$. Therefore, it holds that $p \prec_{\sigma^*} p^*$ and $p^* \prec_\sigma p$. However, this is a contradiction to the observation above since $p$ would be in $C_\sigma(p^*)$ but not in $C_{\sigma^*}(p^*)$.

Since no special property of LexDFS and CompLexDFS is used in the proof above, the claim also holds for LexBFS and CompLexBFS.
\end{proof}

Combining Lemmas~\ref{lemma:berry} and~\ref{lemma:comp} yields the following corollary.

\begin{corollary}\label{corol:ldfs-lbfs-trees}
Let $G=(V,E)$ be a chordal graph and $T$ be a spanning tree of $G$ rooted in $s \in V$. The tree $T$ is an $\cl$-tree of LexDFS of $G$ if and only if $T$ is an $\cl$-tree of LexBFS of $G$.
\end{corollary}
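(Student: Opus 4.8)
The plan is to chain the two preceding lemmas into a sequence of equivalences that connects the $\cl$-trees of LexDFS and LexBFS through their component variants. Writing each double arrow below as an abbreviation for ``have the same set of $\cl$-trees on $G$'', I would establish
\[
\text{LexDFS} \;\Leftrightarrow\; \text{CompLexDFS} \;\Leftrightarrow\; \text{CompLexBFS} \;\Leftrightarrow\; \text{LexBFS}.
\]
The first and last equivalences are exactly Lemma~\ref{lemma:comp}, applied to the DFS variant and to the BFS variant respectively. The middle equivalence is where chordality enters, via Lemma~\ref{lemma:berry}.

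For the middle step, the observation I would make explicit is that, by Definition~\ref{def:tree}, the $\cl$-tree of a vertex order is determined solely by that order: each $v_i$ is joined to its rightmost earlier neighbor in $\sigma$, with no reference to the protocol that produced $\sigma$. Consequently the set of $\cl$-trees generated by a search $\mathcal{P}$ is precisely the image of the set of $\mathcal{P}$-orders under the (protocol-independent) map $\sigma \mapsto (\cl\text{-tree of }\sigma)$. Lemma~\ref{lemma:berry} states that on a chordal graph the set of CompLexDFS orders and the set of CompLexBFS orders coincide; applying this same order-to-tree map to two identical sets of orders yields two identical sets of $\cl$-trees, which is the middle equivalence. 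Composing all three equivalences then proves the corollary.

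I do not expect a genuinely hard part here, since both lemmas do the substantive work and the argument is essentially bookkeeping. The single point requiring a little care is the middle equivalence: one must not slide directly from the order-level identity (``every CompLexDFS order is a CompLexBFS order, and conversely'') to the tree-level identity without noting the protocol-independence of the $\cl$-tree construction. Once that bridge is flagged, the three equivalences compose immediately and the statement follows.
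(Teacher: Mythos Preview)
Your proposal is correct and matches the paper's own justification, which simply states that combining Lemma~\ref{lemma:berry} and Lemma~\ref{lemma:comp} yields the corollary. Your explicit mention that the $\cl$-tree map depends only on the order (not the protocol) is a useful clarification that the paper leaves implicit, but the argument is otherwise identical.
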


Using this corollary, we can compute an \cl-tree of LexDFS rooted in vertex $ s $ for any chordal graph $ G $ by using LexBFS. This tree can then be used as the input for Algorithm~\ref{algo:ordering} to return a LexDFS order for $ G $. Furthermore, it is possible to implement LexDFS$ ^+ $ in linear time for chordal graphs using the same approach (see Algorithm~\ref{algo:ldfs_ordering}).

\begin{algorithm2e}
	\caption{LexDFS$ ^+ $ on chordal graphs}\label{algo:ldfs_ordering}	
	\KwIn{Chordal graph $G=(V,E)$, vertex $s \in V$, order $\rho$ of $V$ ending with $s$}
	\KwOut{LexDFS$^+(\rho)$ order $\sigma$ of $G$ starting at $s$}
	\Begin{
		$\pi \leftarrow$ LexBFS$^+(\rho)$ order of $G$\;
		$T \leftarrow$ \cl-tree of $\pi$\;
		$\sigma \leftarrow$ \textsc{Ordering}($G$, $T$, $s$, $\rho$)\;
		\Return{$\sigma$}\;
	}	
\end{algorithm2e}

 \begin{theorem}\label{theo:ldfs}
 Let $G = (V,E)$ be a chordal graph, $s$ be a vertex in $V$ and $\rho$ be an arbitrary order of $V$ ending in $s$. Then for input $(G,s,\rho)$ Algorithm~\ref{algo:ldfs_ordering} produces a LexDFS$^+(\rho)$ order of $G$ starting at $s$ in time $\O(n+m)$.
 \end{theorem}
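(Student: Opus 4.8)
The plan is to prove the two assertions separately: that the running time is in $\O(n+m)$, and that the returned order $\sigma$ equals the unique LexDFS$^+(\rho)$ order. The running time is the routine part. The first line computes $\pi$, a LexBFS$^+(\rho)$ order, which can be produced in $\O(n+m)$ by partition refinement using $\rho$ as tie-break. Constructing the \cl-tree $T$ of $\pi$ is linear by Definition~\ref{def:tree}, and the call to \textsc{Ordering} runs in $\O(n+m)$ by Lemma~\ref{lemma:running-time-ordering}; hence the total is $\O(n+m)$.

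For correctness I would first show that $\sigma$ is a LexDFS order of $G$ at all. Since $\pi$ is in particular a LexBFS order, its \cl-tree $T$ is an \cl-tree of LexBFS on $G$. This is where chordality enters: by Corollary~\ref{corol:ldfs-lbfs-trees}, $T$ is then also an \cl-tree of LexDFS on $G$. Feeding this $T$ to Algorithm~\ref{algo:ordering}, Theorem~\ref{thm:tree-ordering} guarantees that its output $\sigma$ is a LexDFS order of $G$, and by Lemma~\ref{lemma:tree} the \cl-tree of $\sigma$ is again $T$.

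The main work --- and the step I expect to be the obstacle --- is upgrading ``$\sigma$ is \emph{some} LexDFS order'' to ``$\sigma$ is \emph{the} LexDFS$^+(\rho)$ order''. I would argue by contradiction. Let $\sigma^+$ denote the LexDFS$^+(\rho)$ order and let $i$ be the first index with $v := \sigma(i) \neq \sigma^+(i) =: v^+$. Since the two orders share the prefix $\sigma(1),\dots,\sigma(i-1)$, at step $i$ both searches face the same set $L$ of unnumbered vertices carrying the lexicographically largest label, and both $v$ and $v^+$ lie in $L$. The key structural fact is that in LexDFS a vertex's label is precisely the decreasing list of indices of its already-numbered neighbors; hence two vertices with equal label have the same set of numbered neighbors, and in particular the same rightmost (largest-index) numbered neighbor, which is exactly their parent in the \cl-tree $T$. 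Consequently every vertex of $L$ --- in particular $v$ and $v^+$ --- shares one common parent $p$ in $T$ and has the same neighborhood in $Y = \{x \mid x \prec_\sigma v\}$.

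These are exactly the hypotheses of Lemma~\ref{lemma:rho}. Applying it to the pair $v \prec_\sigma v^+$ yields $v^+ \prec_\rho v$, i.e.\ $v$ lies to the right of $v^+$ in $\rho$. But LexDFS$^+(\rho)$ selects the $\rho$-rightmost element of $L$, so $v^+$ must be to the right in $\rho$ of every other member of $L$; as $v \in L$ this forces $v \prec_\rho v^+$, contradicting the conclusion of Lemma~\ref{lemma:rho}. Hence no first disagreement can exist, so $\sigma = \sigma^+$; in particular $\sigma$ starts at $s$, since $s$ is $\rho$-rightmost and is therefore chosen first by LexDFS$^+(\rho)$. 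The one point to nail down carefully is the ``equal label $\Rightarrow$ equal parent and equal left-neighborhood'' implication, since it is precisely what makes Lemma~\ref{lemma:rho} applicable.
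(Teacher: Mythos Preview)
Your argument has a genuine gap at the step ``equal label $\Rightarrow$ equal parent in $T$''. You correctly observe that $v$ and $v^+$ share the same neighborhood in $Y = \{\sigma(1),\dots,\sigma(i-1)\}$, hence the same rightmost neighbor $p$ in $Y$. For $v = \sigma(i)$ this $p$ is indeed the parent of $v$ in $T$, because $T$ is the \cl-tree of $\sigma$. But $v^+$ is not $\sigma(i)$; it sits at some position $j > i$ in $\sigma$, and its parent in $T$ is its rightmost neighbor among $\sigma(1),\dots,\sigma(j-1)$, which may well be a vertex with $\sigma$-index $\geq i$ rather than $p$. Put differently: your reasoning would go through if $T$ were also the \cl-tree of $\sigma^+$, but nothing you have proved gives you that. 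Since ``same parent in $T$'' is an explicit hypothesis of Lemma~\ref{lemma:rho}, the lemma cannot yet be invoked.

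The paper's proof faces exactly this obstacle and does not assume a common parent. Instead it performs a case split on the relative position of $v$ and $v^*$ in the LexBFS$^+(\rho)$ order $\pi$; one direction is dispatched via the tie-breaking in $\pi$, and in the remaining direction it separately treats the subcase of different parents in $T$, using the ancestor structure of $T$ together with the four-point property of LexDFS (Lemma~\ref{lemma:3point-lexdfs}) to derive a contradiction. Your appeal to Lemma~\ref{lemma:rho} is precisely the easy ``same parent'' subcase of that analysis; what is missing is the argument ruling out the other subcase, and that is where the real work --- and the only essential use of $\pi$ beyond producing $T$ --- lies.
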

 \begin{proof}
 Due to Corollary~\ref{corol:ldfs-lbfs-trees}, the tree $T$ is an $\cl$-tree of LexDFS. By Theorem~\ref{thm:tree-ordering}, \textsc{Ordering}($G$, $T$, $s$, $\rho$) produces a LexDFS order of $G$ starting at $s$.
 
 It remains to show that $\sigma$ is also the LexDFS$^+(\rho)$ order. Let $\sigma^*$ be the LexDFS$^+(\rho)$ order of $G$ and assume that $\sigma \neq \sigma^*$. Let $i \in \{1,\ldots,n\}$ be the first index where $v = \sigma(i) \neq \sigma^*(i) = v^*$ and let $\sigma_i$ be the prefix of the first $i-1$ elements of $\sigma$ (and $\sigma^*$). It follows that $v$ and $v^*$ have the same neighborhood in $\sigma_i$ and $v^*$ must be to the right of $v$ in $\rho$.
  
 Assume that $v$ is to the left of $v^*$ in the LexBFS$^+(\rho)$ order $\pi$. Since $v \prec_\rho v^*$, vertex $v$ had a larger label than $v^*$ at the point where it was chosen in $\pi$. This implies that there is a vertex $w$ with $w \prec_\pi v \prec_\pi v^*$ such that $wv \in E(G)$ but $wv^* \notin E(G)$. Due to Lemma~\ref{lemma:dfstrees}, vertex $w$ has to be an ancestor of $v$ in $T$ and, therefore, $w$ is in $\sigma_i$. This is a contradiction as $v$ and $v^*$ have the same neighbors in $\sigma_i$.
 
 Therefore, we can assume that $v^*$ is to the left of $v$ in $\pi$. If $v$ and $v^*$ have the same parent in $T$, then vertex $v$ has to be to the right of $v^*$ in $\rho$, due to Lemma~\ref{lemma:rho}; a contradiction.
 Thus, assume that $p$ is the parent of $v$ but not the parent of $v^*$ in $T$. However since $p$ is in $\sigma_i$, vertex $v^*$ is adjacent to $p$ in $G$ and, therefore, is a descendant of $p$ in $T$, due to Lemma~\ref{lemma:dfstrees}. Let $x$ be the unique child of $p$ in $T$, which is an ancestor of $v^*$. Note that $x \neq v$ since otherwise $v \prec_\pi v^*$. Furthermore, it holds that both $x \prec_\pi v$ and $v \prec_\sigma x$ and every neighbor of $x$, which is to the left of $x$ in $\pi$, is an element of $\sigma_i$, due to the choice of $i$. 
 If $x$ has the same neighborhood in $\sigma_i$ as $v$ and $v^*$, then $v^* \prec_\rho x$ and, due to Lemma~\ref{lemma:rho}, it holds that $x \prec_\sigma v$; a contradiction. Thus, $x$ has a neighbor $y$ in $\sigma_i$ which is neither a neighbor of $v$ nor of $v^*$. Let $y$ be the rightmost vertex in $\sigma_i$ with this property. Since $\sigma$ is a LexDFS order there must be a vertex $z$ with $y \prec_\sigma z \prec_\sigma v$ such that $vz \in E(G)$ and $xz \notin E(G)$, due to Lemma~\ref{lemma:3point-lexdfs}. Since $z$ is also adjacent to $v^*$ we can use Lemma~\ref{lemma:3point-lexdfs} again leading to a vertex $u$ with $z \prec_\sigma u \prec_\sigma x$ that is adjacent to $x$ but not to $v^*$. Due to the choice of $y$, vertex $u$ must be to the right of $v$ in $\sigma$. This is contradiction to $p$ being the parent of $x$ in the \cl-tree $T$ of $\sigma$. 

 Since all three steps can be executed in linear time (see Lemma~\ref{lemma:running-time-ordering}), the algorithm has linear running time in total.
 \end{proof}
 
It follows from Observation~\ref{obs:lexdfs+} that Algorithm~\ref{algo:ldfs_ordering} is able to compute any LexDFS order of a chordal graph $G$.
 
 \begin{corollary}
 Algorithm~\ref{algo:ldfs_ordering} can compute any LexDFS order of a chordal graph $G$.
 \end{corollary}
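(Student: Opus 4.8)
The plan is to combine Theorem~\ref{theo:ldfs} with Observation~\ref{obs:lexdfs+} directly. The key realization is that the LexDFS$^+$ scheme is rich enough to recover every LexDFS order when fed the appropriate tie-break, and that Algorithm~\ref{algo:ldfs_ordering} faithfully computes LexDFS$^+$ on chordal graphs.

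Concretely, I would start by letting $\sigma$ be an arbitrary LexDFS order of the chordal graph $G$, and set $s = \sigma(1)$ to be its starting vertex. The first thing to check is that the reverse order $\sigma^-$ is a legitimate input for the tie-break parameter $\rho$: since $\sigma$ begins with $s$, its reverse $\sigma^-$ ends with $s$, so $\sigma^-$ is an order of $V$ ending in $s$, exactly as required by the specification of Algorithm~\ref{algo:ldfs_ordering}. Running the algorithm on input $(G, s, \sigma^-)$ is therefore well-defined.

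Next I would invoke Theorem~\ref{theo:ldfs} with $\rho = \sigma^-$: it guarantees that Algorithm~\ref{algo:ldfs_ordering} on this input produces precisely the LexDFS$^+(\sigma^-)$ order of $G$ starting at $s$. To identify this output with $\sigma$ itself, I would then apply Observation~\ref{obs:lexdfs+}, which states that $\sigma$ is a LexDFS order if and only if LexDFS$^+(\sigma^-)$ equals $\sigma$. Since $\sigma$ was chosen to be a LexDFS order, the observation yields LexDFS$^+(\sigma^-) = \sigma$, and chaining this with Theorem~\ref{theo:ldfs} shows that the algorithm outputs exactly $\sigma$. As $\sigma$ was an arbitrary LexDFS order of $G$, this establishes that every LexDFS order is reachable.

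I do not expect any genuine obstacle here: the statement is a corollary in the strict sense, as essentially all of the work has been front-loaded into Theorem~\ref{theo:ldfs} (correctness of the algorithm on chordal graphs) and Observation~\ref{obs:lexdfs+} (the fixed-point characterization of LexDFS orders via their reverse as tie-break). The only point requiring a moment of care is the bookkeeping that $\sigma^-$ ends in $s$ so that it is an admissible value of $\rho$; beyond that, the proof is a two-line composition of the previously established results.
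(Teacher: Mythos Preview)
Your proposal is correct and matches the paper's approach exactly: the paper states the corollary as an immediate consequence of Observation~\ref{obs:lexdfs+} together with Theorem~\ref{theo:ldfs}, which is precisely the composition you describe (feeding $\sigma^-$ as the tie-break $\rho$ and using the fixed-point characterization).
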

 
This result does not hold for efficient implementations of graph searches in general. One example is \emph{Maximal Neighborhood Search (MNS)} introduced by Corneil and Krueger in 2008~\cite{corneil2008unified} as a generalization of both LexBFS and LexDFS. This search can be implemented with linear running time by implementing LexBFS. However, not every MNS order is a LexBFS order, so this approach can only compute a subset of the MNS orders of a graph.
Observation~\ref{obs:lexdfs+} also leads to an easy recognition algorithm of LexDFS orders.
 
 \begin{corollary}
   LexDFS orders can be recognized in linear time on chordal graphs.
 \end{corollary}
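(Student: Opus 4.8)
The plan is to reduce the recognition of LexDFS orders to a single application of the linear-time LexDFS$^+$ algorithm from Theorem~\ref{theo:ldfs}, using the characterization in Observation~\ref{obs:lexdfs+}. Given a chordal graph $G$ and a candidate order $\sigma = (v_1, \ldots, v_n)$ whose status as a LexDFS order we wish to decide, I would first set $s = v_1 = \sigma(1)$ and form the reverse order $\sigma^-$. The key structural point to check is that $\sigma^-$ is a legitimate tie-break input for Algorithm~\ref{algo:ldfs_ordering}: since $\sigma^- = (v_n, \ldots, v_1)$ ends in $v_1 = s$, it satisfies the hypothesis that $\rho$ ends in the start vertex $s$.

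Next I would run Algorithm~\ref{algo:ldfs_ordering} on the input $(G, s, \sigma^-)$. By Theorem~\ref{theo:ldfs}, this produces the order LexDFS$^+(\sigma^-)$ of $G$ starting at $s$ in time $\O(n+m)$. It then remains only to compare this computed order with $\sigma$ position by position, which takes linear time. By Observation~\ref{obs:lexdfs+}, $\sigma$ is a LexDFS order of $G$ if and only if LexDFS$^+(\sigma^-)$ equals $\sigma$, so the equality test decides membership correctly.

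Since each of the three stages---reversing $\sigma$, invoking Algorithm~\ref{algo:ldfs_ordering}, and comparing the two orders---runs in $\O(n+m)$, the whole procedure is linear. The only point requiring genuine care is ensuring that the reduction computes LexDFS$^+$ with the intended tie-break $\sigma^-$ rather than some arbitrary LexDFS order; this is precisely what Theorem~\ref{theo:ldfs} guarantees on chordal graphs, and it is also the reason the argument does not transfer to general graphs, where no linear-time LexDFS$^+$ implementation is presently known. Beyond this bookkeeping I do not expect a real obstacle, as all the substantive work is carried by the earlier results.
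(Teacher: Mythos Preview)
Your proposal is correct and is exactly the approach the paper intends: the paper's own justification of the corollary is simply the remark that Observation~\ref{obs:lexdfs+} combined with the linear-time LexDFS$^+$ of Theorem~\ref{theo:ldfs} gives the recognition algorithm, and you have spelled out precisely this reduction.
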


 To illustrate the final procedure of Algorithm~\ref{algo:ldfs_ordering}, we give the following example.
 
 \def\aa{a}
 \def\bb{b}
 \def\cc{c}
 \def\dd{d}
 \def\ee{e}
 \def\ff{f}
 \def\gg{g}
 \def\hh{h}
 \def\ii{i}
 \def\jj{j}

 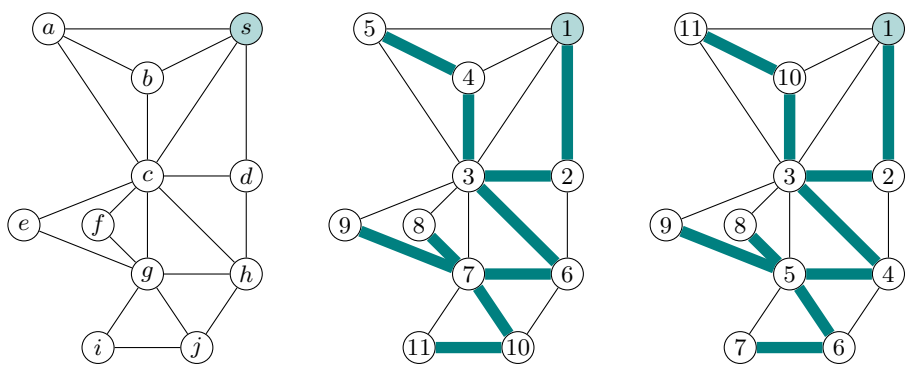
\begin{figure}
 \centering
	\begin{tikzpicture}[scale=.65,
	vertex/.style={inner sep=0pt,minimum size=12pt,draw,circle},
	tree/.style={teal,line width=1.5mm},
	root/.style={inner sep=0pt,minimum size=12pt,draw,circle,fill=teal!30!white}]
	\small
	\begin{scope}[xscale=-1]
	 \node[root] (s) at (0,0) {$s$};
	 \node[vertex] (i) at (4,0) {$\aa$};
	 \node[vertex] (d) at (2,-1) {$\bb$};
	 \node[vertex] (a) at (0,-3) {$\dd$};
	 \node[vertex] (e) at (2,-3) {$\cc$};
	 \node[vertex] (h) at (3,-4) {$\ff$};
	 \node[vertex] (b) at (0,-5) {$\hh$};
	 \node[vertex] (f) at (2,-5) {$\gg$};
	 \node[vertex] (c) at (1,-6.5) {$\jj$};
	 \node[vertex] (g) at (3,-6.5) {$\ii$};
	 \node[vertex] (j) at (4.5,-4) {$\ee$};
	 
	 \draw (s) -- (i);
	 \draw (s) -- (d);
	 \draw (s) -- (a);
	 \draw (s) -- (e);
	 \draw (i) -- (d);
	 \draw (i) -- (e);
	 \draw (d) -- (e);
	 \draw (a) -- (e);
	 \draw (a) -- (b);
	 \draw (e) -- (b);
	 \draw (e) -- (f);
	 \draw (e) -- (h);
	 \draw (b) -- (f);
	 \draw (b) -- (c);
	 \draw (f) -- (h);
	 \draw (f) -- (c);
	 \draw (f) -- (g);
	 \draw (c) -- (g);
	 \draw (j) -- (e);
	 \draw (j) -- (f);
	\end{scope}
	
	\begin{scope}[xshift=6.5cm,xscale=-1]
	 \node[root] (s) at (0,0) {$1$};
	 \node[vertex] (i) at (4,0) {$5$};
	 \node[vertex] (d) at (2,-1) {$4$};
	 \node[vertex] (a) at (0,-3) {$2$};
	 \node[vertex] (e) at (2,-3) {$3$};
	 \node[vertex] (h) at (3,-4) {$8$};
	 \node[vertex] (b) at (0,-5) {$6$};
	 \node[vertex] (f) at (2,-5) {$7$};
	 \node[vertex] (c) at (1,-6.5) {$10$};
	 \node[vertex] (g) at (3,-6.5) {$11$};
	 \node[vertex] (j) at (4.5,-4) {$9$};
	 
	 \draw (s) -- (i);
	 \draw (s) -- (d);
	 \draw[tree] (s) -- (a);
	 \draw (s) -- (e);
	 \draw[tree] (i) -- (d);
	 \draw (i) -- (e);
	 \draw[tree] (d) -- (e);
	 \draw[tree] (a) -- (e);
	 \draw (a) -- (b);
	 \draw[tree] (e) -- (b);
	 \draw (e) -- (f);
	 \draw (e) -- (h);
	 \draw[tree] (b) -- (f);
	 \draw (b) -- (c);
	 \draw[tree] (f) -- (h);
	 \draw[tree] (f) -- (c);
	 \draw (f) -- (g);
	 \draw[tree] (c) -- (g);
	 \draw (j) -- (e);
	 \draw[tree] (j) -- (f);
	\end{scope}

	\begin{scope}[xshift=13cm,xscale=-1]
	 \node[root] (s) at (0,0) {$1$};
	 \node[vertex] (i) at (4,0) {$11$};
	 \node[vertex] (d) at (2,-1) {$10$};
	 \node[vertex] (a) at (0,-3) {$2$};
	 \node[vertex] (e) at (2,-3) {$3$};
	 \node[vertex] (h) at (3,-4) {$8$};
	 \node[vertex] (b) at (0,-5) {$4$};
	 \node[vertex] (f) at (2,-5) {$5$};
	 \node[vertex] (c) at (1,-6.5) {$6$};
	 \node[vertex] (g) at (3,-6.5) {$7$};
	 \node[vertex] (j) at (4.5,-4) {$9$};
	 
	 \draw (s) -- (i);
	 \draw (s) -- (d);
	 \draw[tree] (s) -- (a);
	 \draw (s) -- (e);
	 \draw[tree] (i) -- (d);
	 \draw (i) -- (e);
	 \draw[tree] (d) -- (e);
	 \draw[tree] (a) -- (e);
	 \draw (a) -- (b);
	 \draw[tree] (e) -- (b);
	 \draw (e) -- (f);
	 \draw (e) -- (h);
	 \draw[tree] (b) -- (f);
	 \draw (b) -- (c);
	 \draw[tree] (f) -- (h);
	 \draw[tree] (f) -- (c);
	 \draw (f) -- (g);
	 \draw[tree] (c) -- (g);
	 \draw (j) -- (e);
	 \draw[tree] (j) -- (f);
	\end{scope}
  
 \end{tikzpicture}
 \caption{The graph $G=(V,E)$ on the left side is chordal. 
 In the middle, the \cl-tree of a LexBFS starting at $s$ is shown. Vertex labels correspond to the index in the search order. On the right side, the same tree is shown, but now the labeling fits to a LexDFS starting at $s$.}\label{fig:example}
 \end{figure}

 \begin{example}
  Given the chordal graph in Figure~\ref{fig:example}, start vertex $s$ and $\rho=(a,b,\dots,j,s)$, we begin by computing a LexBFS$^+(\rho)$ order using partition refinement. The first pivot is $s$ with neighborhood $N(s)=\{\dd,\cc,\bb,\aa\}$, which yields the partition $(s)(\dd,\cc,\bb,\aa)(\jj,\ii,\hh,\gg,\ff,\ee)$. With $\dd$ as the next pivot, we obtain $(s)(\dd)(\cc)(\bb,\aa)(\hh)(\jj,\ii,\gg,\ff,\ee)$. After a few more steps, we have $\pi=(s,\dd,\cc,\bb,\aa,\hh,\gg,\ff,\ee,\jj,\ii)$ which is the LexBFS$^+(\rho)$ order. Now, we consider the \cl-tree $T$ induced by this order, which is shown in Figure~\ref{fig:example}.
  
  By Corollary~\ref{corol:ldfs-lbfs-trees}, we see that $T$ is also an \cl-tree of LexDFS rooted in $s$. 
  We first compute the order $\beta$ and, as seen before, we can use any order where the children of a vertex are always to the left of their parent. Thus, we can use $\beta=\pi^-$, although it is not a BFS order of $T$, since $T$ is an \cl-tree and not a standard \cf-tree. 
  
  Now, we iterate through $\beta$ and use partition refinement, beginning with $\mathcal{Q}=\beta$, to compute a final tie-breaking rule $\tau$ (see Algorithm~\ref{algo:ordering}). Vertex $\ii$ has an empty neighborhood to the left so nothing has to be done. Vertex $\jj$ as neighbor $\ii$ to the left in $\beta$, so the first refinement of $\mathcal{Q}$ occurs. After processing vertices $\ee$ and $\ff$ with empty left neighborhoods, we refine for $\gg$ with $\{\ee,\ff,\ii,\jj\}$. This yields the intermediate partition $(\ii)(\jj,\ee,\ff)(\gg,\hh,\aa,\bb,\cc,\dd,s)$. Finally, we obtain $\mathcal{Q}=(\ii)(\jj)(\ee,\ff)(\aa)(\gg)(\hh)(\bb)(\cc)(\dd)(s)$.

  $\mathcal{Q}$ is post-processed to compute $\tau$. Here, we sort $(\ee,\ff)$ with respect to $\rho^-$, which is the only part of $\mathcal{Q}$ with more than one vertex. Furthermore, we move $s$ to the leftmost position and reverse the whole order. This yields $\tau=(\dd,\cc,\bb,\hh,\gg,\aa,\ee,\ff,\jj,\ii,s)$. Now, we perform a final DFS$^+(\tau)$ on the tree $T$. We start in $s$ and follow the unique path to $\cc$. Vertex $\cc$ has two children in $T$ and $\tau$ forces us to take $\hh$, since $\hh$ is to the right of $\bb$ in $\tau$. Similarly, $\jj$ is chosen after $\gg$ due to $\tau$. The final LexDFS$^+(\rho)$ order is $(s,\dd,\cc,\hh,\gg,\jj,\ii,\ff,\ee,\bb,\aa)$. It is shown on the right side of Figure~\ref{fig:example}.
 
 \end{example}

 \section{Conclusion}

In this paper, we have presented the first linear time implementation of LexDFS on chordal graphs. This is already the second important subclass of perfect graphs, the other being cocomparability graphs, that admits a linear time implementation of LexDFS. In contrast to the algorithm for cocomparability graphs~\cite{kohler2014linear}, however, our approach can compute any LexDFS ordering with arbitrary start vertices. Thus, it also yields the first unrestricted linear time implementation of LexDFS on interval graphs, which are the intersection of cocomparability graphs and chordal graphs. It remains an open question whether this result can be algorithmically exploited to efficiently solve hard problems on other subclasses of chordal graphs besides interval graphs in linear time.

In the light of these results, the question of whether LexDFS can be executed in linear time in general is even more interesting. There are several open questions. Can the search tree approach be extended? Are there other graph classes where rooted \cl-trees of LexBFS and LexDFS coincide or is this a characteristic property of chordal graphs? It is also possible that there are other graph classes and other modifications of graph searches that produce a tree equivalent to an \cl-tree of LexDFS in linear time on graphs of this particular class. 

However, if the answer would be `no', that is, we cannot find a general linear time algorithm for LexDFS, it is an interesting question whether recognizing LexDFS orderings can be done faster than actually generating one. In particular, is it possible to check in linear time whether a given vertex ordering or a search tree belongs to LexDFS?

\bibliography{ldfs-chordal}
 
\end{document}